\documentclass[12pt,reqno]{amsart}
\usepackage{graphicx}
\usepackage{amssymb,amsmath}
\usepackage{amsthm}
\usepackage{color}
\usepackage[pdf]{pstricks}
\usepackage{hyperref}
\usepackage{empheq}
\usepackage{pgfplots}

\usepackage{mathtools}

\DeclareMathOperator{\sech}{sech}

\usepackage[normalem]{ulem}

\usepackage{tikz}
\usepackage{graphicx}  
\usepackage{amsmath}  
\usepackage{amssymb,amsthm}
\usepackage{color}

\newcommand{\R}{{\mathbb R}}
\newcommand{\C}{{\mathbb C}}

\newtheorem{theorem}{Theorem}
\newtheorem{lemma}{Lemma}

\newtheorem{remark}{Remark}

\begin{document}

\title[Persistence of kinks]{\bf Persistence of kinks \\ in the NLS equation with competing nonlinearities}

\author[P. G. Kevrekidis]{Panayotis G. Kevrekidis}
\address[P. G. Kevrekidis]{Department of Mathematics and Statistics, University of Massachusetts Amherst, Amherst, MA 01003-4515, USA}
\address[P. G. Kevrekidis]{Department of Physics, University of Massachusetts Amherst, Amherst, MA 01003, USA}
\email{kevrekid@umass.edu}

\author[D. E. Pelinovsky]{Dmitry E. Pelinovsky}
\address[D. E. Pelinovsky]{Department of Mathematics and Statistics, McMaster University, Hamilton, Ontario, Canada, L8S 4K1}
\email{pelinod@mcmaster.ca}

\begin{abstract}
External potentials break the translational invariance of nonlinear Schr\"odinger (NLS) equations and generally convert a continuous family of standing wave solutions into isolated pinned states.  We study this mechanism for the prototypical case of a kink in the NLS equation with competing focusing cubic and defocusing quintic nonlinearities. The Lyapunov--Schmidt reduction method is used to prove that the location $s_0$ of the pinned kinks is selected near roots of an effective potential $V_{\rm eff}(s)$. We then determine how the translational zero eigenvalue of the Jacobian operator splits under the perturbation and prove that the sign of $V_{\rm eff}'(s_0)$ distinguishes  eigendirections of positive and negative energy. We further show that if the Jacobian operator admits a negative eigenvalue, then the stability problem contains a real unstable eigenvalue.
The derivation of the asymptotic expansion for the unstable eigenvalue is nonstandard because the zero eigenvalue is embedded in the continuous spectrum of the linearized operator. Finally, we show that the persistence of kinks lacks parity symmetry even if the external potential is odd. Numerical continuation, spectral computations with domains adapted to the slowly decaying eigenfunctions, and direct dynamical
simulations of the time evolution corroborate the analytical predictions for different representative examples.
\end{abstract}

\maketitle

\section{Introduction}
\label{sec:introduction}

\subsection{Background} 

Nonlinear Schr\"odinger (NLS) models with competing nonlinearities arise naturally when distinct physical mechanisms dominate at different amplitudes.  A canonical example is the focusing--defocusing cubic--quintic NLS equation, which has long served as a prototypical model for saturation and higher-order nonlinear response in optics and for the coexistence of qualitatively different coherent structures \cite{CrasovanMalomedMihalache2001}.  In nonlinear optical media, the simultaneous action of third- and fifth-order susceptibilities has been observed experimentally and modeled successfully by cubic--quintic NLS-type equations \cite{FalcaoFilhoEtAl2013,ReynaEtAl2020}.  The same general competition between attractive and repulsive effects has also become central in ultracold gases \cite{Petrov2015,PetrovAstrakharchik2016}, where the beyond-mean-field corrections (e.g., due to quantum fluctuations)  can arrest the mean-field collapse and produce self-bound quantum droplets \cite{LuoEtAl2021}. The mathematical analysis of the cubic--quintic NLS equation, including the problems of uniqueness, nondegeneracy, scattering, and orbital stability, has also received considerable attention, see  \cite{KillipOhPocovnicuVisan2017}, \cite{LewinRotaNodari2020}, and \cite{CarlesSparber2021}.

Competing nonlinearities support a particularly rich set of nonlinear states.  In the one-dimensional setting, bright droplets, bubbles, black solitons, and  kinks can occur in neighboring parameter regimes \cite{KatsimigaEtAl2023PRA}.  The kink is distinguished by connecting two different asymptotic states, one vanishing and the other nonzero.  In this sense it shares features with both bright and black solitons: one side approaches the zero background, whereas the other carries a finite-amplitude background.  
It is also rather special, as it emerges at a threshold between other states (such as droplets and bubbles) and in that
sense, it arises for special values of the system parameters.
Recent work has shown that such kinks can be remarkably 
stable against transverse perturbations in higher-dimensional settings with competing attractive and repulsive interactions \cite{MistakidisEtAl2025PRL}, compared to the transverse instability of bright solitons in the cubic--quintic NLS equation \cite{KleinSparber2026}.

The stability of kinks and black solitons is mathematically subtle because nonzero boundary conditions make it hard to control phase modulations outside the localized state by using the energy functional \cite[Chapter 7]{GeyerPelinovsky2025}.  For the defocusing cubic NLS equation, orbital stability of the black solitons was established in the natural energy space where the  phase modulations were not properly controlled \cite{BethuelGravejatSautSmets2008}. The exponentially weighted energy space was introduced in \cite{GravejatSmets2015} and used in \cite{GallayPelinovsky2015} to control the phase modulations outside the black solitons.  For the kink of the cubic--quintic NLS equation considered here, a recent rigorous analysis was developed in \cite{HolmerKevrekidisPelinovsky2025} by using the exponentially weighted energy space. The main difference between the black solitons and kinks is that the black solitons are constrained minimizers of energy subject to fixed momentum, whereas the kinks are unconstrained minimizers of energy. These stability and energy minimization results provide a natural starting point for asking what becomes of the kink when the translational invariance is broken due to an external potential. This question is similar to the
effective dynamics of bright and dark solitons in external potentials, see~\cite{BronskiJerrard2000,Sigal2004}, \cite{KevPel08}, and \cite{Bengel2026,BPR24}.

\subsection{Main results}
 
{\em The purpose of the present work is to answer the persistence question for kinks in a small  potential $\varepsilon V$ with $0 < \varepsilon \ll 1$.}  In the unperturbed cubic--quintic problem, the kink belongs to the translational orbit
\[
\{\phi_0(\cdot-s):s\in\mathbb{R}\}.
\]
Consequently, the Jacobian operator at $\phi_0$ possesses a zero eigenvalue generated by the translational mode $\phi_0'$.  Once an external potential 
$V(x) : \mathbb{R}\to \mathbb{R}$ is introduced, the translation parameter $s$ is no longer free and only pinned kinks can generally persist  for isolated values of $s$.  At the same time, the zero eigenvalue of 
the Jacobian operator moves away from the origin, and its sign determines whether the pinned state remains a local energy minimizer or becomes a saddle point of energy.  

The corresponding spectral stability problem has an additional complication.  Because the kink approaches a nonzero state at one spatial infinity, the zero eigenvalue is no longer isolated, but is embedded into the continuous spectrum of the linearized operator. The standard feature of the linearized Hamiltonian system for bright solitons given by the square-root splitting of a double isolated eigenvalue (see, e.g., \cite{Kapitula}) does not apply in the stability problem for kinks. 
Furthermore, the splitting of the embedded zero eigenvalue for kinks is very different from the splitting of the embedded zero eigenvalue for black solitons \cite{KevPel08}. 

\medskip

{\em Our analytical part contains four results. } 

\medskip

\underline{First,} the Lyapunov--Schmidt reduction method is used to convert the persistence problem of the kink orbit under a small potential $\varepsilon V$ into a root-finding problem involving the effective potential $V_{\rm eff}(s) : \R \to \R$ given by 
\[
V_{\rm eff}(s)=\frac14\int_{\mathbb{R}}V(x)\sech^2(x-s)\,dx.
\]
A simple zero $s_0$ of $V_{\rm eff}$ selects a unique pinned kink near $\phi_0(\cdot-s_0)$ for sufficiently small perturbation strength $0 < \varepsilon \ll 1$. The result is proven in Theorem \ref{th-existence} under the only condition $V \in L^2(\R)$, which is used to close the fixed-point arguments of the Lyapunov--Schmidt reduction method.

\medskip

\underline{Second,} the perturbation theory for the Jacobian operator relates the shift of the small eigenvalue associated with the translational mode $\phi'_0$ to the sign of $V_{\rm eff}'(s_0)$ for the simple zero $s_0$. The precise asymptotic behavior of the eigenvalue is given by 
$$
\mu_+(\varepsilon) = -4 \varepsilon V'_{\rm eff}(s_0) + \mathcal{O}(\varepsilon^2).
$$  
In particular, $V_{\rm eff}'(s_0)<0$ induces a simple positive eigenvalue, whereas $V_{\rm eff}'(s_0)>0$ induces a simple negative eigenvalue.  This result is given by Theorem \ref{th-splitting} under the condition $V \in H^1(\R)$, 
where the additional constraint $V' \in L^2(\R)$ is needed to justify the result of the first-order perturbation theory and to relate projections involving $V'$ with the derivative of $V_{\rm eff}$ at $s_0$. If we add the variational interpretation of the result of Theorem \ref{th-splitting}, then 
we additionally need to restrict $V \in L^1(\R)$ in order to define the energy functional in the energy space. Under this restriction $V \in H^1(\R) \cap L^1(\R)$, the kink is a local energy minimizer if $V_{\rm eff}'(s_0)<0$ and an energy saddle point if $V_{\rm eff}'(s_0) > 0$.

\medskip

\underline{Third,} we analyze the spectral stability problem and obtain a real unstable eigenvalue if $V_{\rm eff}'(s_0) > 0$ for the simple zero $s_0$. The precise asymptotic behavior of the eigenvalue is given by 
\[
\lambda(\varepsilon)=2\varepsilon V_{\rm eff}'(s_0)+\mathcal{O}(\varepsilon^2).
\]
The linear dependence of the small eigenvalue on $\varepsilon$, rather than the standard splitting $\varepsilon^{1/2}$ associated with an isolated double zero eigenvalue in the linearized Hamiltonian systems, reflects the embedding property of the zero eigenvalue in the unperturbed spectral problem. This result is proven in Theorem \ref{th-stability} under the condition $V \in H^1(\R) \cap L^1(\R)$. The restriction of $V \in L^1(\R)$ provides precise information about the spatial decay of eigenfunctions for small eigenvalues $\lambda$ of the spectral stability problem with ${\rm Re}(\lambda) > 0$.

\medskip

\underline{Finally,} we address the parity symmetry of the kink solutions. In the unperturbed problem, the kink is symmetric relative to the level $\frac{1}{\sqrt{2}}$ in the sense that $v_0 := 2 \phi_0^2 - 1$ is an odd function of $x$ on $\mathbb{R}$. We show in Theorem \ref{th-parity} that in the perturbed problem with the kink profile $\phi_{\varepsilon}$, the function $v_{\varepsilon} := 2 \phi_{\varepsilon}^2 - 1$ is not odd even if $V$ is odd in $x$. The result is obtained for $V \in H^1(\R)$ and we use the continuity of $V$ from Sobolev embedding of $H^1(\R)$ to $C^0(\R) \cap L^{\infty}(\R)$ to prove that the even terms of the differential equation for $v_{\varepsilon}$ are nonzero even if $V$ is odd. The result of Theorem \ref{th-parity} is rather surprising since $V_{\rm eff}$ is odd in $s$ if $V$ is odd in $x$, which implies that the odd potential $V$ supports pinning at $s_0 = 0$ at the leading order in $\varepsilon$. Theorem \ref{th-parity} suggests that the pinning location $s(\varepsilon)$ is nonzero beyond the leading order in $\varepsilon$.

\medskip

{\em The analytical predictions are complemented by direct numerical computations.}  

\medskip

We first examine odd potentials $V(x) = \pm x e^{-x^2}$, for which the selected center is $s_0=0$ and the sign of the potential reverses the energetic and spectral conclusions.  We then study the continuous family of even potentials
\[
V(x) = (1-a x^2)\sech^2(x), \quad a > 0,
\]
for which two symmetry-related pinned branches coexist and merge at a finite value of $a > 0$.  Pseudo-arclength continuation captures the fold bifurcation diagram, while computations of the Jacobian operator and of the spectral stability spectrum test the predicted eigenvalue asymptotics.  Since the eigenfunction for the unstable eigenvalue develops a slowly decaying tail as the eigenvalue approaches the origin, the spectral computations use domains adapted to that decay scale. This is essential for separating genuine spectral behavior from finite-domain effects.  

Finally, time integration illustrates the dynamical departure of the perturbed kink from the unstable branch and the ensuing motion toward the stable branch, according to the effective viscous particle equation 
$$
\frac{ds}{dt} = 2 \varepsilon V_{\rm eff}(s), \quad s \in \mathbb{R}.
$$
The justification of the validity of the viscous particle equation for all times $t \geq 0$ and the asymptotic stability of the pinned kink as an energy minimizer are left open for further studies. We point out that the viscous particle equation was justified in the recent work \cite{Bengel2026} for the damped forced NLS equation (the Lugiato--Lefever model), where it was also shown how well the 1D dynamics predicts the numerical evolution of the pinned solutions. Compared to the pinned kink in the NLS equation, the 1D dynamics in the Lugiato--Lefever equation is not surprising since the bifurcating simple eigenvalue is isolated from the continuous spectrum of the linearized operator located in the left half of the complex plane \cite{BPR24}.

\subsection{Organization of the paper}  

Section \ref{sec-2} establishes persistence of the kink by the  Lyapunov--Schmidt reduction method and identifies the effective potential $V_{\rm eff}$.  Section \ref{sec-3} studies the second variation of the energy and the splitting of the translational eigenvalue $\mu_+(\varepsilon)$ of the Jacobian operator.  Section \ref{sec-4} analyzes the spectral stability problem and derives the asymptotic expansion for the  unstable eigenvalue $\lambda(\varepsilon)$. Section \ref{sec-5} addresses the lack of parity symmetry for the pinned kink. Section \ref{sec:numerics} compares the analytical predictions with the numerical continuation, spectral calculations, and nonlinear dynamics of kinks.

\section{Persistence argument}
\label{sec-2}

We consider the cubic--quintic NLS equation with the external potential,
\begin{equation}
\label{NLS}
i \psi_t - \psi_{xx} + \varepsilon V(x) \psi + \psi - 4 |\psi|^2 \psi + 3 |\psi|^4 \psi = 0, \quad x \in \mathbb{R}, 
\end{equation}
where $\varepsilon$ is a small parameter and $V(x) : \R \to \R$ is a fixed potential under some suitable assumptions. We consider time-independent real-valued solutions of the stationary equation 
\begin{equation}
\label{ode}
\phi'' = \phi (1 - \phi^2) (1- 3 \phi^2) + \varepsilon V \phi, \quad x \in \mathbb{R}, 
\end{equation}
subject to the boundary conditions $\phi(x) \to 0$ as $x \to -\infty$ and $\phi(x) \to 1$ as $x \to +\infty$. These solutions are referred to as {\em kinks}. 

If $\varepsilon = 0$, then the kink solution exists with the monotonically increasing profile $\phi$ found from the first-order reduction
\begin{equation}
\label{first-order}
\varepsilon = 0 : \quad \phi' = \phi (1-\phi^2) > 0.
\end{equation}
The exact solution of (\ref{first-order}) is the orbit $\{ \phi_0(x - s) \}_{s \in \mathbb{R}}$ with $\phi_0$ defined by 
\begin{equation}
    \label{exact-kink}
    \phi_0(x) = \left[ \frac{1}{2} (1 + \tanh(x)) \right]^{1/2}.
\end{equation}

If $\varepsilon \neq 0$ is a small parameter, we can get a family of solutions 
$\{ \phi_{\varepsilon}\}_{\varepsilon \in \mathbb{R}}$ to (\ref{ode}) by using the Lyapunov--Schmidt reduction method from the solution $\phi_0(x-s_0)$ with $\phi_0$ as in (\ref{exact-kink}) and with $s_0$ uniquely selected.  The following theorem gives the persistence result. 

\begin{theorem}
	\label{th-existence}
Assume that $V \in L^2(\mathbb{R})$ and that $s_0 \in \mathbb{R}$ is a simple root of the effective potential $V_{\rm eff}(s) : \R \to \R$ given by 
\begin{equation}
\label{eff-potential}
V_{\rm eff}(s) := \frac{1}{4} \int_{\mathbb{R}} V(x) {\rm sech}^2(x-s) dx.
\end{equation}
There exist $\varepsilon_0 > 0$ and $C_0 > 0$ such that for every 
$\varepsilon \in (-\varepsilon_0,\varepsilon_0)$, the stationary equation (\ref{ode}) admits the kink solution with the profile $\phi_{\varepsilon}$ satisfying
\begin{equation}
\label{bound-kink}
\| \phi_{\varepsilon} - \phi_0(\cdot - s_0) \|_{H^2} \leq C_0 |\varepsilon|,
\end{equation}
which implies that $\phi_{\varepsilon}(x) \to 0$ as $x \to -\infty$ and $\phi_{\varepsilon}(x) \to 1$ as $x \to +\infty$.
\end{theorem}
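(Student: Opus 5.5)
We carry out a Lyapunov--Schmidt reduction around the translated kink $\phi_0(\cdot-s)$, with $s$ near $s_0$. We write $\phi = \phi_0(\cdot - s) + \varphi$ and look for $\varphi \in H^2(\R)$. Substituting into (\ref{ode}) gives
\[
L_s \varphi = \varepsilon V (\phi_0(\cdot-s) + \varphi) + N_s(\varphi),
\]
where
\[
L_s := -\partial_x^2 + W''(\phi_0(\cdot-s)), \qquad W'(\phi)=\phi(1-\phi^2)(1-3\phi^2),
\]
and $N_s(\varphi)$ collects the quadratic and higher terms, which are polynomial in $\varphi$. The potential of $L_s$ tends to $W''(0)=1$ as $x\to-\infty$ and to $W''(1)=4$ as $x\to+\infty$. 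Hence $\sigma_{\rm ess}(L_s)=[1,\infty)$. The kernel of $L_s$ is spanned by $\phi_0'(\cdot-s)$, which lies in $H^2$ since it decays exponentially at both ends. This translational mode is positive, so it is the ground state, and $0$ is a simple isolated eigenvalue. Consequently $L_s$ is Fredholm of index zero from $H^2$ to $L^2$. Its range is the $L^2$-orthogonal complement of $\phi_0'(\cdot-s)$, and $L_s$ restricted to $\{\phi_0'(\cdot-s)\}^\perp\cap H^2$ has a bounded inverse. The inverse bound is uniform in $s$, since $L_s$ is just a translate of $L_{0}$.

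Let $P_s$ be the orthogonal projection onto $\{\phi_0'(\cdot-s)\}^\perp$. We impose $\varphi \perp \phi_0'(\cdot-s)$, and we may do so because moving $s$ absorbs the kernel direction. The range equation then reads
\[
\varphi = L_s^{-1}P_s\big[\varepsilon V(\phi_0(\cdot-s)+\varphi) + N_s(\varphi)\big].
\]
We solve it by the contraction mapping principle in a ball of radius $C|\varepsilon|$ in $H^2$. The forcing satisfies $\|V\phi_0\|_{L^2}\le \|V\|_{L^2}$ because $|\phi_0|\le1$. The term $V\varphi$ is controlled by $\|V\|_{L^2}\|\varphi\|_{L^\infty}\lesssim \|V\|_{L^2}\|\varphi\|_{H^2}$, using the Sobolev embedding. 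This is exactly where the hypothesis $V\in L^2$ is used. The map $N_s$ is Lipschitz on small balls of $H^2$, since $H^2$ is an algebra, and $N_s(\varphi)=\mathcal{O}(\|\varphi\|_{H^2}^2)$. These bounds give a unique $\varphi(\varepsilon,s)$ with $\|\varphi\|_{H^2}\le C|\varepsilon|$. It depends smoothly (at least $C^1$) on $(\varepsilon,s)$, because all the operators are smooth in $s$ with values in bounded operators: translations act smoothly on $\phi_0$ and $\phi_0'$, and the dependence can be routed through the smooth kink rather than through $V$.

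The bifurcation equation is the projection onto the kernel:
\[
F(\varepsilon,s):=\langle \varepsilon V(\phi_0(\cdot-s)+\varphi)+N_s(\varphi),\phi_0'(\cdot-s)\rangle = 0 .
\]
We write $F=\varepsilon\, G(\varepsilon,s)$. From $N_s=\mathcal{O}(\varepsilon^2)$ and $\varphi=\mathcal{O}(\varepsilon)$ we get $G(\varepsilon,s)=\langle V\phi_0(\cdot-s),\phi_0'(\cdot-s)\rangle+\mathcal{O}(\varepsilon)$. The first-order reduction (\ref{first-order}) gives $\phi_0\phi_0'=\tfrac12(\phi_0^2)'=\tfrac14\sech^2$, so the leading term is exactly $V_{\rm eff}(s)$. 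Since $s_0$ is a simple root, $G(0,s_0)=0$ and $\partial_sG(0,s_0)=V_{\rm eff}'(s_0)\neq0$. The implicit function theorem then yields $s(\varepsilon)$ with $s(0)=s_0$ and $|s(\varepsilon)-s_0|\lesssim|\varepsilon|$.

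Finally, bound (\ref{bound-kink}) follows from
\[
\|\phi_0(\cdot-s(\varepsilon))-\phi_0(\cdot-s_0)\|_{H^2}\lesssim|s(\varepsilon)-s_0|,
\]
which holds because $\phi_0'\in H^2$, together with $\|\varphi\|_{H^2}\lesssim|\varepsilon|$. The boundary conditions follow from $\varphi\in H^2\subset C_0(\R)$.

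The main obstacle is proving that the $\mathcal{O}(\varepsilon)$ remainder in $G$ is $C^1$ in $s$. Differentiating $\varphi$ in $s$ is harmless through the smooth kink terms, but the term $\langle V\varphi,\phi_0'(\cdot-s)\rangle$ involves only $V\in L^2$. I would first change variables $x\mapsto x+s$, so that all $s$-dependence sits in $V(\cdot+s)$. Translation is continuous in $L^2$, which gives $G$ continuous in $s$. For the sign change needed at the root I would then use only continuity plus the strict sign change of $V_{\rm eff}$ near $s_0$, applying the intermediate value theorem instead of the implicit function theorem. This already gives existence, which is all the statement requires. To get uniqueness and smooth dependence I would alternatively show that $\partial_s$ falls only on the smooth factors after the change of variables.
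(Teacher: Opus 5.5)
Your proposal is correct and is essentially the paper's proof. It uses the same Lyapunov--Schmidt splitting with the constraint $\langle \phi_0'(\cdot-s),w\rangle_{L^2}=0$, the same contraction in $H^2$ giving $\|w\|_{H^2}\le C|\varepsilon|$ uniformly in $s$, the same reduction of the bifurcation function to $V_{\rm eff}(s)+\mathcal{O}(\varepsilon)$ via $\phi_0\phi_0'=\tfrac14\sech^2$, and then the implicit function theorem and the triangle inequality.

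The obstacle in your last paragraph is not real. In the original frame $V$ is never translated, so all $s$-dependence sits in the smooth kink terms, and your third-paragraph argument already gives the $C^1$ dependence that the paper takes for granted. The substitution $x\mapsto x+s$ would instead move the $s$-dependence onto $V(\cdot+s)$ and destroy differentiability, so your closing remark about $\partial_s$ falling only on smooth factors after that change of variables is backwards. The intermediate-value fallback does still give existence.
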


\begin{proof}
Let us represent the kink solution as $\phi_{\varepsilon}(x) = \phi_0(x-s) + w(x)$ and rewrite (\ref{ode}) as
\begin{equation}
\label{fixed-point}
L_+ w + \varepsilon V(x) (\phi_0(x-s) + w) + N(w) = 0, 
\end{equation}
where $L_+ : H^2(\R) \subset L^2(\R) \to L^2(\R)$ is given by the differential expression 
\begin{equation}
\label{operator-plus}
L_+ = -\partial_x^2 + 1 - 12 \phi_0^2(x-s) + 15 \phi_0^4(x-s)
\end{equation}
and $N(w)$ is at least quadratic near $0$ according to the expansion
$$
N(w) = w^2 (30 \phi_0^3(x-s) - 12 \phi_0(x-s)) + w^3 (30 \phi_0^2(x-s) - 4) + 15 w^4 \phi_0(x-s) + 3 w^5.
$$
For unique definition of $(w,s) \in H^2(\R) \times \mathbb{R}$, we add the constraint $\langle \phi_0'(\cdot - s), w \rangle_{L^2} = 0$ with the standard definition of the inner product in $L^2(\R)$.

Since $\sigma(L_+) = \{ 0 \} \cup [1,\infty)$ in $L^2(\R)$ with ${\rm Ker}(L_+) = {\rm span}(\phi'_0(\cdot - s))$ and $H^2(\R)$ is a Banach algebra with respect to multiplication, we can decompose the problem (\ref{fixed-point}) into two parts. The first part is a root-finding  equation for $F(w,s,\varepsilon) : H^2(\R) \times \R \times \R \to \R$:
\begin{equation}
\label{bif-eq}
F(w,s,\varepsilon) = 0, \quad 
F(w,s,\varepsilon) := \langle \phi'_0(\cdot - s), 
V (\phi_0(\cdot-s) + w) + \varepsilon^{-1} N(w) \rangle_{L^2}.
\end{equation}
The second part is given by the fixed-point equation for 
$G(w,s,\varepsilon) : H^2(\R) \times \R \times \R \to H^2(\R)$:
\begin{equation}
\label{non-eq}
w = G(w,s,\varepsilon), \quad 
G(w,s,\varepsilon) := -\Pi_0 L_+^{-1} \Pi_0 [\varepsilon V (\phi_0(\cdot - s) + w) + N(w)],
\end{equation}
where $\Pi_0 : L^2(\R) \to L^2(\R) |_{({\rm Ker}(L_+))^{\perp}}$ is the orthogonal projection operator. Since $\Pi_0 L_+^{-1} \Pi_0 : L^2(\R) \to H^2(\R)$ is bounded and $H^2(\R)$ is continuously embedded into $L^2(\R) \cap L^{\infty}(\R)$, there are $C_1,C_2,C_3 > 0$ such that for all $w \in H^2(\R)$, we have 
\begin{align*}
\| G(w,s,\varepsilon) \|_{H^2} &\leq C_1 \| \varepsilon V (\phi_0(\cdot - s) + w) + N(w) \|_{L^2} \\
&\leq C_2 |\varepsilon| 
\| V \|_{L^2} ( 1 + \| w \|_{H^2}) + C_3 \| w\|_{H^2}^2 (1 + \| w \|^3_{H^2}).
\end{align*}
By the Banach fixed-point theorem, there exists a unique fixed point 
$w \in H^2(\R)$ of (\ref{non-eq}) near $0$ for any $(s,\varepsilon)$ near $(s_0,0)$ for an arbitrary fixed $s_0 \in \R$. The unique solution $w \in H^2(\R)$  denoted as $w(x;s,\varepsilon)$ satisfies 
\begin{equation}
\label{bound-on-w}
\| w(\cdot;s,\varepsilon) \|_{H^2} \leq C_4 |\varepsilon|, \quad \varepsilon \in (-\varepsilon_0,\varepsilon_0),
\end{equation}
where $\varepsilon_0 > 0$ is sufficiently small and $C_4 > 0$ is independent of $s$ and $\varepsilon$ in $\mathbb{R} \times (-\varepsilon_0,\varepsilon_0)$. As a result, we get from (\ref{bif-eq}) and (\ref{bound-on-w}) that 
\begin{equation}
\label{bif-eq-exp}
F(w(\cdot;s,\varepsilon),s,\varepsilon) = V_{\rm eff}(s) + \tilde{F}(s,\varepsilon),
\end{equation}
where $V_{\rm eff}(s) : \R\to \R$ given by 
$$
V_{\rm eff}(s) = \langle \phi_0'(\cdot - s), 
V \phi_0(\cdot-s) \rangle_{L^2}
$$
is the effective potential and 
$\tilde{F}(s,\varepsilon) : \R \times \R\to \R$ is a smooth function of $(s,\varepsilon)$ satisfying the bound 
\begin{equation}
\label{F-tilde-bound}
|\tilde{F}(s,\varepsilon)| \leq C_5 |\varepsilon|, \quad  \varepsilon \in (-\varepsilon_0,\varepsilon_0),
\end{equation}
where $C_5 > 0$ is independent of $s$ and $\varepsilon$ in $\mathbb{R} \times (-\varepsilon_0,\varepsilon_0)$. Since (\ref{exact-kink}) implies that 
$$
\phi_0 \phi_0' = \frac{1}{2} \frac{d}{dx} \phi_0^2(x) = \frac{1}{4} {\rm sech}^2(x),
$$ 
the expression for $V_{\rm eff}(s)$ reduces to (\ref{eff-potential}). Let $s_0 \in \mathbb{R}$ be a simple root of $V_{\rm eff}$ such that 
\begin{equation}
\label{roots}
V_{\rm eff}(s_0) = 0, \quad V_{\rm eff}'(s_0) \neq 0.
\end{equation}
By using the expansion (\ref{bif-eq-exp}) with the bound (\ref{F-tilde-bound}), and the implicit function theorem, we obtain a unique solution $s \in \R$ to (\ref{bif-eq}) near $s_0$ for any $\varepsilon$ near $0$. The unique solution denoted as $s(\varepsilon)$ satisfies
\begin{equation}
\label{bound-on-s}
|s(\varepsilon) - s_0| \leq C_6 |\varepsilon|, \quad  \varepsilon \in (-\varepsilon_0,\varepsilon_0),
\end{equation}
where $C_6 > 0$ is independent of $\varepsilon$ in $(-\varepsilon_0,\varepsilon_0)$. We get (\ref{bound-kink}) from the bounds (\ref{bound-on-w}), (\ref{bound-on-s}), and the triangle inequality with
$$
\| \phi_0(\cdot - s(\varepsilon)) - \phi_0(\cdot - s_0) \|_{H^2} \leq C_7 
|s(\varepsilon) - s_0| \leq C_6 C_7 |\varepsilon|, \quad  \varepsilon \in (-\varepsilon_0,\varepsilon_0),
$$ 
where $C_7 > 0$ is independent of $\varepsilon$ in $(-\varepsilon_0,\varepsilon_0)$.
\end{proof}

\begin{remark}
	If $V(-x) = -V(x)$ is odd and smooth, then $s_0 = 0$ is a suitable solution to (\ref{roots}) provided that $\int_{\mathbb{R}} V'(x) {\rm sech}^2(x) dx \neq 0$. Nevertheless, we show in Theorem \ref{th-parity} that there is no odd parity of $v_{\varepsilon} := 2 \phi_{\varepsilon}^2 - 1$ for $\varepsilon \neq 0$ even if $V$ is odd. This suggests that $s(\varepsilon)$ in the bound (\ref{bound-on-s}) is generally nonzero for $\varepsilon \neq 0$ even if $s_0 = 0$ and $V$ is odd.
\end{remark}

\section{Kinks as minimizers or saddle points of energy}
\label{sec-3}

The NLS equation (\ref{NLS}) admits the conserved energy functional 
\begin{equation}
\label{energy}
E_{\varepsilon}(\psi) = \int_{\R} \left( |\psi_x|^2 + |\psi|^2 (1 - |\psi|^2)^2 + \varepsilon V |\psi|^2 \right) dx,
\end{equation}
which is well-defined in the energy space 
$$
\mathcal{E} = \left\{ \psi \in H^1_{\rm loc}(\mathbb{R}) : \quad \psi_x \in L^2(\R), \quad \psi (1-|\psi|^2) \in L^2(\R) \right\} \subset L^{\infty}(\R),
$$
provided that $V \in L^1(\R)$. 

The profile $\phi_{\varepsilon}$ of the kink solution in Theorem \ref{th-existence} satisfies the stationary equation (\ref{ode}), which is the Euler--Lagrange equation $E_{\varepsilon}'(\phi) = 0$ for the energy (\ref{energy}) in the space of real-valued functions. The second variation of $E_{\varepsilon}(\psi)$ at the kink profile $\phi_{\varepsilon}$ is defined in the space of real-valued functions by the self-adjoint operator $\mathcal{L}_+(\varepsilon) : H^2(\R) \subset L^2(\R) \to L^2(\R)$ given by the differential expression
\begin{equation}
\label{L-plus}
\mathcal{L}_+(\varepsilon) = -\partial_x^2 + 1 - 12 \phi_{\varepsilon}^2 + 15 \phi_{\varepsilon}^4 + \varepsilon V.
\end{equation}
The linear operator $\mathcal{L}_+(\varepsilon)$ is $\varepsilon$-close to the limiting operator $L_+$ given by  (\ref{operator-plus}) with $s = s(\varepsilon)$ being $\varepsilon$-close to $s_0$, see the bound (\ref{bound-on-s}). Since $\sigma(L_+) = \{ 0 \} \cup [1,\infty)$ in $L^2(\R)$ with ${\rm Ker}(L_+) = {\rm span}(\phi'_0(\cdot - s))$, 
one can anticipate that the $\varepsilon$-terms generate a perturbation which shifts the zero eigenvalue of $L_+$ to either a {\em positive} or a {\em negative} eigenvalue of $\mathcal{L}_+(\varepsilon)$ in $L^2(\R)$. In the former case, $\mathcal{L}_+(\varepsilon)$ is strictly positive and $\phi_{\varepsilon}$ is a strict minimizer of energy $E_{\varepsilon}(\psi)$ in the space of real-valued functions under the condition $V \in L^1(\R)$. In the latter case, $\mathcal{L}_+(\varepsilon)$ is sign-indefinite and $\phi_{\varepsilon}$ is a saddle point of energy $E_{\varepsilon}(\psi)$. The following theorem describes the splitting of the zero eigenvalue of $L_+$.

\begin{theorem}
	\label{th-splitting}
	Assume that $V \in H^1(\mathbb{R})$ and that $s_0 \in \mathbb{R}$ is a simple root of the effective potential $V_{\rm eff}$ given by (\ref{eff-potential}). Let $\{ \phi_{\varepsilon} \}_{\varepsilon \in (-\varepsilon_0,\varepsilon_0)}$ be the family of kink solutions satisfying (\ref{bound-kink}). Then, for every $\varepsilon \in (0,\varepsilon_0)$, the linear operator $\mathcal{L}_+(\varepsilon)$ is strictly positive if 
	$V_{\rm eff}'(s_0) < 0$ and admits a simple negative eigenvalue if 
	$V_{\rm eff}'(s_0) > 0$. The smallest eigenvalue of $\mathcal{L}_+(\varepsilon)$ denoted as $\mu_+(\varepsilon)$ satisfies $\mu_+(\varepsilon) = -4 \varepsilon V_{\rm eff}'(s_0) + \mathcal{O}(\varepsilon^2)$ as $\varepsilon \to 0$.
\end{theorem}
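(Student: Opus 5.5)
The plan is standard analytic perturbation theory for the isolated simple zero eigenvalue of $L_+$, followed by an explicit evaluation of the first-order correction.

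\emph{Step 1: the spectral picture near zero.} Write $\mathcal{L}_+(\varepsilon) = L_+ + \varepsilon W_\varepsilon$, where $L_+$ is taken at $s=s(\varepsilon)$. The perturbation is
\[
\varepsilon W_\varepsilon = -12(\phi_\varepsilon^2-\phi_0^2) + 15(\phi_\varepsilon^4-\phi_0^4) + \varepsilon V,
\]
with $\phi_0$ evaluated at $\cdot-s(\varepsilon)$. By Theorem \ref{th-existence}, the bound (\ref{bound-on-w}), and $H^2\subset L^\infty$, this is a bounded multiplication operator of norm $\mathcal{O}(\varepsilon)$.

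Since $\sigma(L_+)=\{0\}\cup[1,\infty)$ and $0$ is simple, Kato's theory (or the min-max principle together with a Riesz projection on a small circle) gives the following. For small $\varepsilon$, $\mathcal{L}_+(\varepsilon)$ has exactly one eigenvalue $\mu_+(\varepsilon)$ in $(-1/2,1/2)$, and the rest of its spectrum lies in $[1-C\varepsilon,\infty)$. Hence the sign of $\mu_+(\varepsilon)$ decides between strict positivity and a simple negative eigenvalue.

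\emph{Step 2: the first-order formula.} Let $e=\phi_0'(\cdot-s)/\|\phi_0'\|_{L^2}$. Perturbation theory gives
\[
\mu_+(\varepsilon)=\varepsilon\langle e,W_\varepsilon e\rangle+\mathcal{O}(\varepsilon^2).
\]
Write $\phi_\varepsilon=\phi_0(\cdot-s)+w$. Then $\varepsilon W_\varepsilon=(-24\phi_0+60\phi_0^3)w+\varepsilon V+\mathcal{O}(\varepsilon^2)$. We have $w=-\varepsilon\Pi_0L_+^{-1}\Pi_0 (V\phi_0)+\mathcal{O}(\varepsilon^2)$ from (\ref{non-eq}), since $N(w)=\mathcal{O}(\varepsilon^2)$.

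The key observation is that differentiating $L_+\phi_0'=0$ in $x$ gives
\[
L_+\phi_0''=-(-24\phi_0+60\phi_0^3)(\phi_0')^2 .
\]
Using self-adjointness, this turns $\langle \phi_0',(-24\phi_0+60\phi_0^3)w\,\phi_0'\rangle$ into $-\langle L_+\phi_0'',w\rangle=-\langle\phi_0'',L_+w\rangle$. By (\ref{fixed-point}), $L_+w=-\varepsilon V\phi_0+\varepsilon c\,\phi_0' +\mathcal{O}(\varepsilon^2)$, where the kernel component is harmless because $\langle\phi_0'',\phi_0'\rangle=0$.

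Hence
\[
\|\phi_0'\|^2\mu_+=\varepsilon\left(\langle \phi_0'',V\phi_0\rangle+\langle\phi_0',V\phi_0'\rangle\right)+\mathcal{O}(\varepsilon^2)=\varepsilon\int V(\phi_0\phi_0')'\,dx+\mathcal{O}(\varepsilon^2).
\]
Since $\phi_0\phi_0'=\tfrac14{\rm sech}^2$, integration by parts (this is where $V\in H^1$ is used) gives $-\varepsilon\int V'\,\tfrac14{\rm sech}^2(x-s)\,dx=-\varepsilon V_{\rm eff}'(s)$.

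Finally, $\|\phi_0'\|_{L^2}^2=\int\phi_0^2(1-\phi_0^2)^2dx$. With $u=\phi_0^2$ and $dx=du/(2u(1-u))$, this equals $\int_0^1\tfrac12(1-u)\,du=\tfrac14$. This yields $\mu_+=-4\varepsilon V_{\rm eff}'(s)+\mathcal{O}(\varepsilon^2)$, and replacing $s(\varepsilon)$ by $s_0$ costs $\mathcal{O}(\varepsilon^2)$ by (\ref{bound-on-s}), using the continuity of $V_{\rm eff}'$.

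\emph{Main obstacle.} The delicate point is justifying the $\mathcal{O}(\varepsilon^2)$ remainders uniformly. This requires the expansion of $w$ to first order in $H^2$, and the bound $\|w+\varepsilon\Pi_0L_+^{-1}\Pi_0(V\phi_0)\|_{H^2}\le C\varepsilon^2$ follows from the contraction in Theorem \ref{th-existence}. It also requires the second-order perturbation term $\varepsilon^2\langle W e,(L_+-\mu)^{-1}\Pi_0 We\rangle=\mathcal{O}(\varepsilon^2)$, which holds by the spectral gap. Both are routine given the bounded perturbation.
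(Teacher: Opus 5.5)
Your proposal is correct and follows the paper's proof. Both arguments use first-order perturbation theory for the simple isolated eigenvalue $0\in\sigma(L_+)$. Both reduce the coefficient $\langle L_+^{(1)}\phi_0',\phi_0'\rangle$ to $\int_{\R}V(\phi_0\phi_0')'\,dx=-V_{\rm eff}'(s)$ by an $x$-differentiation identity, and both use $\|\phi_0'\|_{L^2}^2=\frac14$.

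The only difference is where the derivative falls. You differentiate $L_+\phi_0'=0$ and move $L_+$ onto $w$ by self-adjointness. The paper instead differentiates the corrector equation (\ref{eq-w-1}) and projects onto $\phi_0'$, which is why it needs $w_1\in H^3$, and hence $V\in H^1$. Your version of the identity avoids that extra regularity of $w_1$.
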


\begin{proof}
	To simplify computations of the perturbation theory, we adopt the decomposition $\phi_{\varepsilon}(x) = \phi_0(x-s) + w(x)$ from the proof of Theorem \ref{th-existence}, even though the parameter $s = s(\varepsilon)$ is uniquely defined in the neighborhood of $s_0$ and satisfies the bound (\ref{bound-on-s}). It follows from the fixed-point equation (\ref{non-eq}) that the solution 
	$w \in H^2(\R)$ can be represented by using the near-identity transformation in the form:
	$$
	w = \varepsilon w_1 + \tilde{w}_{\varepsilon}, 
	$$ 
	where $w_1 \in H^2(\R)$ is a unique solution of 
	\begin{equation}
\label{w-1}
	L_+ w_1 = -\Pi_0 V \phi_0(\cdot - s)
	\end{equation}
	satisfying $\langle \phi_0'(\cdot - s), w_1 \rangle_{L^2} = 0$ and $\tilde{w}_{\varepsilon} \in H^2(\R)$ is the remainder term satisfying the bound $\|\tilde{w}_{\varepsilon} \|_{H^2} \leq C \varepsilon^2$ for some $C > 0$ independently of $\varepsilon$ in $(-\varepsilon_0,\varepsilon_0)$. This allows us to expand
	\begin{equation}
\label{expansion-1}
	\mathcal{L}_+(\varepsilon) = L_+ + \varepsilon L_+^{(1)} +  \tilde{\mathcal{L}}_+(\varepsilon),
	\end{equation}
	where 
	$$
	L_+^{(1)} = V - 24 \phi_0(\cdot - s) w_1 + 60 \phi_0^3(\cdot - s) w_1,
	$$
	and $\tilde{\mathcal{L}}_+(\varepsilon)$ contains the remainder terms from $-12 \phi_{\varepsilon}^2 + 15 \phi_{\varepsilon}^4$ satisfying $\| \tilde{\mathcal{L}}_+(\varepsilon) \|_{H^2} \leq C \varepsilon^2$ for some $C > 0$ independently of $\varepsilon$ in $(-\varepsilon_0,\varepsilon_0)$. The linear equation (\ref{w-1}) can be rewritten explicitly as 
\begin{equation}
	\label{eq-w-1}
	(-\partial_x^2 + 1 - 12 \phi_0^2 + 15 \phi_0^4) w_1 = -V\phi_0 + \frac{\langle \phi_0',V \phi \rangle_{L^2}}{\| \phi_0'\|^2_{L^2}} \phi_0',
\end{equation}
	where $\phi_0 \equiv \phi_0(\cdot - s)$. Since $V \in H^1(\R)$, we have $w_1 \in H^3(\R)$ so that we are allowed to differentiate (\ref{eq-w-1}) in $x$ and to get
	$$
	(-\partial_x^2 + 1 - 12 \phi_0^2 + 15 \phi_0^4) w_1' + (60 \phi_0^3-24 \phi_0) \phi_0' w_1 = -V' \phi_0 - V \phi_0' + \frac{\langle \phi_0',V \phi \rangle_{L^2}}{\| \phi_0'\|^2_{L^2}} \phi_0''.
	$$
	Projecting this equation to $\phi_0' \equiv \phi_0'(\cdot - s)$ yields 
	\begin{equation}
	\label{projection-1}
	\langle L_+^{(1)} \phi_0', \phi_0' \rangle_{L^2} + \langle V' \phi_0, \phi_0' \rangle_{L^2} = 0,
	\end{equation}
	since $\langle \phi_0'', \phi_0' \rangle_{L^2}= 0$. The smallest isolated eigenvalue $\mu_+(\varepsilon)$  of the self-adjoint operator $\mathcal{L}_+(\varepsilon)$ in $L^2(\R)$ expanded as (\ref{expansion-1}) is obtained from the perturbation theory for small $\varepsilon \in (-\varepsilon_0,\varepsilon_0)$ as 
	\begin{align*}
	\mu_+(\varepsilon) &= \varepsilon \frac{\langle L_+^{(1)} \phi_0', \phi_0' \rangle_{L^2}}{\| \phi_0' \|^2_{L^2}} + \mathcal{O}(\varepsilon^2) \\
	&= -\varepsilon \frac{\langle V' \phi_0, \phi_0' \rangle_{L^2}}{\| \phi_0'\|^2_{L^2}} + \mathcal{O}(\varepsilon^2),
	\end{align*}
	where we have used (\ref{projection-1}). Since $H^1(\R)$ is embedded into the space of continuous and bounded functions, integration by parts yields 
	\begin{equation}
\label{projection-3}
	\langle V' \phi_0, \phi_0' \rangle_{L^2} = V_{\rm eff}'(s).
	\end{equation}
	On the other hand, we can compute explicitly
	$$
	\| \phi_0' \|^2_{L^2} = \frac{1}{8} \int_{\mathbb{R}} \frac{dx}{(1 + \tanh(x)) \cosh^4(x)} = \frac{1}{4}.
	$$
	By using the bound (\ref{bound-on-s}), we finally obtain 
\begin{align*}
	\mu_+(\varepsilon) = -4 \varepsilon V_{\rm eff}'(s_0) + \mathcal{O}(\varepsilon^2).
\end{align*}
Since $\sigma(L_+) = \{ 0 \} \cup [1,\infty)$, the smallest isolated eigenvalue $\mu_+(\varepsilon)$ is the only one located near $0$. For  small $\varepsilon \in (0,\varepsilon_0)$, $\mu_+(\varepsilon) > 0$ if $V'_{\rm eff}(s_0) < 0$ and $\mu_+(\varepsilon) < 0$ if $V'_{\rm eff}(s_0) > 0$, which yields the assertion.
\end{proof}

\begin{remark}
	\label{rem-minimizer}
The constraint $V_{\rm eff}'(s_0) \neq 0$ in Theorem \ref{th-splitting} is the same as in Theorem \ref{th-existence}. However, we have to use a more restrictive assumption $V \in H^1(\R)$ in Theorem \ref{th-splitting} compared to $V \in L^2(\R)$ in Theorem \ref{th-existence} to justify differentiating (\ref{eq-w-1}) in $x$. In addition, if we want to place the conclusion 
of Theorem \ref{th-splitting} in the variational context, then  we need to restrict $V$ under the additional condition $V \in L^1(\R)$ so that the energy functional $E_{\varepsilon}(\psi)$ is well-defined in the energy space $\mathcal{E}$. Under this restriction, $V \in H^1(\R) \cap L^1(\R)$, 
the kink profile $\phi_{\varepsilon}$ is a local energy minimizer of $E_{\varepsilon}(\psi)$ if $V_{\rm eff}'(s_0) < 0$ and a saddle point of $E_{\varepsilon}(\psi)$ if $V_{\rm eff}'(s_0) > 0$.
\end{remark}

\section{Spectral stability of the pinned kinks}
\label{sec-4}

The spectral stability of the pinned kinks of Theorem \ref{th-existence} is determined by eigenvalues of the spectral stability problem 
\begin{equation}
\label{spec-stab}
\mathcal{L}_+(\varepsilon) u = -\lambda v, \quad 
\mathcal{L}_-(\varepsilon) v = +\lambda u,
\end{equation}
where $(u+iv) \in H^2(\mathbb{R})$ is the complex-valued perturbation to the kink profile $\phi_{\varepsilon}$ of Theorem \ref{th-existence}, the self-adjoint operator $\mathcal{L}_+(\varepsilon) : H^2(\R) \subset L^2(\R) \to L^2(\R)$ is defined by (\ref{L-plus}), and the self-adjoint operator $\mathcal{L}_-(\varepsilon) : H^2(\R) \subset L^2(\R) \to L^2(\R)$ is defined by  
\begin{equation}
\label{L-minus}
\mathcal{L}_-(\varepsilon) = -\partial_x^2 + 1 - 4 \phi_{\varepsilon}^2 + 3 \phi_{\varepsilon}^4 + \varepsilon V.
\end{equation}
The linear operator $\mathcal{L}_-(\varepsilon)$ is $\varepsilon$-close to the limiting operator $L_- = -\partial_x^2 + 1 - 4 \phi_0^2 + 3 \phi_0^4$ for which we have $\sigma(L_-) = [0,\infty)$ in $L^2(\R)$. Since $\mathcal{L}_-(\varepsilon) \phi_{\varepsilon} = 0$ and 
$\phi_{\varepsilon} \notin L^2(\R)$, $0$ is the end point of the essential spectrum of $\mathcal{L}_-(\varepsilon)$. This creates a challenge for the perturbation theory of small eigenvalues of the spectral problem (\ref{spec-stab}) bifurcating from the zero eigenvalue. To deal with the challenge, we need the following technical result, which is proven under the additional requirement $V \in L^1(\R)$. The constraint $V \in L^1(\R)$ is consistent with the restriction of $V$ in  Remark \ref{rem-minimizer} for the energy functional (\ref{energy}) to be defined in the energy space $\mathcal{E}$.

\begin{lemma}
	\label{lem-Evans}
	Let $\lambda \in \mathbb{C}$ be an isolated eigenvalue of the spectral problem (\ref{spec-stab}) with ${\rm Re}(\lambda) > 0$ 
	and $|\lambda| \leq 1$. If $V \in L^2(\R) \cap L^1(\R)$, then the eigenfunction $(u,v) \in H^2(\R) \times H^2(\R)$ satisfies the asymptotic behavior 
	\begin{equation}
	\label{minus-inf}
	\left( \begin{matrix} u \\ v \end{matrix} \right) \in {\rm span}\left( 	\left( \begin{matrix} -\lambda \\ 1-\kappa_+^2 \end{matrix} \right) e^{\kappa_+ x}, \left( \begin{matrix} -\lambda  \\ 1 - \kappa_-^2 \end{matrix} \right)  e^{\kappa_- x} \right) \quad \mbox{\rm as} \;\; x \to -\infty
	\end{equation}
	with $\kappa_{\pm} = \sqrt{1 \pm i \lambda} = 1 + \mathcal{O}(\lambda)$ as $\lambda \to 0$, and 
	\begin{equation}
	\label{plus-inf}
	\left( \begin{matrix} u \\ v \end{matrix} \right) \in {\rm span}\left( 	\left( \begin{matrix} -\lambda \\ 4-\nu_+^2 \end{matrix} \right) e^{-\nu_+ x}, \left( \begin{matrix} -\lambda  \\ 4 - \nu_-^2 \end{matrix} \right)  e^{-\nu_- x} \right) \quad \mbox{\rm as} \;\; x \to +\infty
	\end{equation}
	with $\nu_+ = \sqrt{2 + \sqrt{4 - \lambda^2}} = 2 + \mathcal{O}(\lambda^2)$ and $\nu_- = \sqrt{2 - \sqrt{4 - \lambda^2}} = \frac{1}{2} \lambda + \mathcal{O}(\lambda^3)$ as $\lambda \to 0$.	
\end{lemma}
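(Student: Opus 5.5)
We rewrite (\ref{spec-stab}) as a first-order linear system in $Y = (u,u',v,v')^T \in \C^4$, $Y' = (A_\pm(\lambda) + R_\pm(x;\lambda,\varepsilon)) Y$. Here $A_\pm(\lambda)$ is the constant-coefficient matrix obtained by replacing $\phi_\varepsilon$ with its limits $0$ (as $x\to-\infty$) and $1$ (as $x\to+\infty$) and dropping $\varepsilon V$. The remainder $R_\pm$ collects $\varepsilon V$ and the differences $\phi_\varepsilon^2 - \phi_\pm^2$, $\phi_\varepsilon^4-\phi_\pm^4$. The plan is a Levinson-type asymptotic argument: if $R_\pm \in L^1$ on the corresponding half-line, then every solution of the full system that lies in $H^2$ is asymptotic to a decaying solution $e^{\kappa x}\xi$ of the constant-coefficient system, with $\kappa$ an eigenvalue of $A_\pm$ and $\xi$ the eigenvector. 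The assumption $V\in L^1(\R)$ is used exactly here.

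\textbf{Step 1: integrability of the remainders.} On $(-\infty,0]$, $\phi_0(x-s)^2 = \mathcal{O}(e^{2x})$, and on $[0,\infty)$, $1-\phi_0(x-s)^2=\mathcal{O}(e^{-2x})$. Theorem \ref{th-existence} gives $\|\phi_\varepsilon-\phi_0(\cdot-s_0)\|_{H^2}\le C_0|\varepsilon|$, so the differences $\phi_\varepsilon^2-\phi_0^2(\cdot-s_0)$ and so on are in $L^2$ but not obviously in $L^1$. I would upgrade this by an ODE argument. As $x\to-\infty$, $\phi_\varepsilon$ solves $\phi''=(1+\varepsilon V+\mathcal{O}(\phi^2))\phi$ with $\phi\to0$. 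By Levinson's theorem applied with $V\in L^1$, $\phi_\varepsilon(x)\sim c e^{x}$, so $\phi_\varepsilon^2\in L^1(-\infty,0)$. As $x\to+\infty$, set $\phi=1+\eta$. Linearizing gives $\eta''=4\eta+\varepsilon V(1+\eta)+\mathcal{O}(\eta^2)$. Variation of parameters with the decaying Green's kernel $e^{-2|x-y|}/4$ shows $\eta\in L^1(0,\infty)$, since it is a convolution of $L^1$ data with an $L^1$ kernel, plus quadratic terms controlled by $\|\eta\|_{L^\infty}\|\eta\|_{L^1}$. Hence $R_\pm\in L^1$ on the respective half-lines.

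\textbf{Step 2: the limiting eigenvalues.} At $-\infty$ the limiting system is $(-\partial_x^2+1)u=-\lambda v$, $(-\partial_x^2+1)v=\lambda u$. Substituting $e^{\kappa x}$ gives $(1-\kappa^2)^2=-\lambda^2$, so $\kappa^2=1\pm i\lambda$. For decay at $-\infty$ we need $\mathrm{Re}\,\kappa>0$, which selects $\kappa_\pm=\sqrt{1\pm i\lambda}$. The eigenvector is $(u,v)=(-\lambda,1-\kappa^2)$, from the first equation.

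At $+\infty$ we have $L_+\to-\partial_x^2+4$ and $L_-\to-\partial_x^2$. Then $(4-\nu^2)(-\nu^2)=-\lambda^2$, i.e. $\nu^4-4\nu^2+\lambda^2=0$, so $\nu^2=2\pm\sqrt{4-\lambda^2}$. Taking the roots with $\mathrm{Re}\,\nu>0$ gives $e^{-\nu_\pm x}$ with eigenvector $(-\lambda,4-\nu^2)$. The expansions $\nu_+=2+\mathcal{O}(\lambda^2)$ and $\nu_-=\lambda/2+\mathcal{O}(\lambda^3)$ are Taylor computations.

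\textbf{Step 3: the dichotomy that Levinson needs.} Each limiting matrix has two roots with positive and two with negative real part. The key claim is that when $\mathrm{Re}\,\lambda>0$, none of the four exponents is purely imaginary, and the two decaying ones are simple and separated from the growing ones. For $\kappa_\pm$ this is clear when $|\lambda|\le1$. For $\nu_-$ I must check that $\mathrm{Re}\,\nu_->0$ strictly when $\mathrm{Re}\,\lambda>0$. The expansion $\nu_-\approx\lambda/2$ suggests this near $0$. For $|\lambda|\le1$ in general, I would argue that $\nu^2\in(-\infty,0]$ would force $\lambda^2=4\nu^2-\nu^4\le0$, i.e. $\lambda\in i\R$, which contradicts $\mathrm{Re}\,\lambda>0$.

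With the dichotomy in hand, Levinson's theorem yields a basis of four solutions asymptotic to $e^{\mu_j x}(\xi_j+o(1))$. An $H^2$ solution must lie in the span of those with decaying exponents, which gives (\ref{minus-inf}) and (\ref{plus-inf}).

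\textbf{Main obstacle.} I expect the main obstacle to be the $+\infty$ side. The slow rate $\nu_-=\mathcal{O}(\lambda)$ degenerates as $\lambda\to0$, and one must make sure the $o(1)$ corrections from the $L^1$ remainder do not overwhelm the slow mode. Since the statement is for fixed $\lambda$, it suffices that the eigenvalues are distinct for $\lambda\ne0$, so Levinson applies pointwise in $\lambda$. I would not try to obtain uniformity in $\lambda$ at this stage. Establishing that $1-\phi_\varepsilon^2\in L^1(0,\infty)$, from Step 1, is the other point that needs care.
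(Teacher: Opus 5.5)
Your proposal is correct and follows the paper's route: Coddington--Levinson asymptotics for the first-order system with an $L^1$ remainder, which is where $V\in L^1$ enters, followed by the characteristic equations at $\pm\infty$ and selection of the decaying roots; your Step 3 supplies the distinctness and no-imaginary-root check that the paper leaves implicit. Step 1 can be shortened, since $\phi_\varepsilon-\phi_0(\cdot-s_0)\in H^2$ already gives $\phi_\varepsilon\in L^2(-\infty,0)$ and $\eta=\phi_\varepsilon-1\in L^2(0,\infty)$, hence $\phi_\varepsilon^2\in L^1(-\infty,0)$ and $\eta^2\in L^1(0,\infty)$ directly, with no Levinson argument at $-\infty$ and no bootstrap on the quadratic terms; moreover, $L^1$-integrability of $\eta$ is exactly what the argument needs, whereas the exponential convergence asserted in the paper can fail at $+\infty$ when $V$ decays only algebraically.
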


\begin{proof}
We can rewrite the spectral problem (\ref{spec-stab}) as the system of differential equations: 
\begin{equation}
\label{linear-de}
	\begin{cases}
	-u'' + u - 12 \phi_{\varepsilon}^2 u + 15 \phi_{\varepsilon}^4 u + \varepsilon V u = -\lambda v, \\
	-v'' + v - 4 \phi_{\varepsilon}^2 v + 3 \phi_{\varepsilon}^4 v + \varepsilon V v = +\lambda u.
	\end{cases}
\end{equation}
	If $V \in L^2(\R)$, then $\phi_{\varepsilon}$ exists by Theorem \ref{th-existence} with the boundary conditions $\phi_{\varepsilon}(x) \to 0$ as $x \to -\infty$ and $\phi_{\varepsilon}(x) \to 1$ as $x \to +\infty$. By Coddington--Levinson theory (see \cite[Proposition 8.1 on p.92]{CL55}), if $V \in L^1(\R)$, then the convergence of $\phi_{\varepsilon}$ as $x \to \pm \infty$ is exponential and solutions of the system  (\ref{linear-de}) for $\lambda \in \mathbb{C}$ with ${\rm Re}(\lambda) > 0$ converge as $x \to \pm \infty$  to the exponential solutions of the linear system of differential equations with 
	constant coefficients. As $x \to -\infty$, where $\phi_{\varepsilon}(x) \to 0$,  the characteristic equation of this linear system is 
	$$
	0 = \left| \begin{matrix} 1 - \kappa^2 & \lambda \\ -\lambda & 1 - \kappa^2 \end{matrix} \right| = (1-\kappa^2)^2 + \lambda^2,
	$$
	with four roots $\pm \kappa_+$ and $\pm \kappa_-$ in the exponentially decaying functions (\ref{minus-inf}). As $x \to +\infty$, where $\phi_{\varepsilon}(x) \to 1$,  the characteristic equation of this linear system is 
	$$
	0 = \left| \begin{matrix} 4 - \nu^2 & \lambda \\ -\lambda & - \nu^2 \end{matrix} \right| = \nu^4 - 4 \nu^2 + \lambda^2,
	$$
	with four roots $\pm \nu_+$ and $\pm \nu_-$ in the exponentially decaying functions (\ref{plus-inf}). Expanding the roots as $\lambda \to 0$ yields the Taylor expansions for $\kappa_{\pm}$ and $\nu_{\pm}$, where the choice of the square root for $\nu_-$ provides the exponential decay of eigenfunctions in (\ref{plus-inf}) with ${\rm Re}(\nu_-) > 0$ for ${\rm Re}(\lambda) > 0$.
\end{proof}

When the {\em zero} eigenvalue of $L_+$ in $L^2(\R)$ becomes a {\em negative} eigenvalue of $\mathcal{L}_+(\varepsilon)$ in $L^2(\R)$ for  $\varepsilon \neq 0$ by Theorem \ref{th-splitting} in the case of $V_{\rm eff}'(s_0) > 0$, we can anticipate the linear instability of the pinned kink and the existence of an isolated eigenvalue $\lambda = \lambda(\varepsilon) \in \mathbb{C}$ with ${\rm Re}(\lambda(\varepsilon)) > 0$ in the spectral stability problem (\ref{spec-stab}) for small $\varepsilon > 0$. The following theorem describes the spectral instability of the corresponding pinned kink and clarifies the leading-order asymptotics of the unstable eigenvalue $\lambda(\varepsilon)$ as $\varepsilon \to 0$.

\begin{theorem}
	\label{th-stability}
	Assume that $V \in H^1(\mathbb{R}) \cap L^1(\R)$ and that $s_0 \in \mathbb{R}$ is a simple root of the effective potential $V_{\rm eff}$ given by (\ref{eff-potential}) such that 	$V_{\rm eff}'(s_0) > 0$. Let $\{ \phi_{\varepsilon} \}_{\varepsilon \in (-\varepsilon_0,\varepsilon_0)}$ be the family of kink solutions satisfying (\ref{bound-kink}). Then, the spectral stability problem 
	(\ref{spec-stab}) with $\varepsilon \in (0,\varepsilon_0)$ admits a real 
	unstable eigenvalue $\lambda = \lambda(\varepsilon)$ such that $\lambda(\varepsilon) = 2 \varepsilon V_{\rm eff}'(s_0) + \mathcal{O}(\varepsilon^2)$ as $\varepsilon \to 0$.
\end{theorem}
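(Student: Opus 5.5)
The plan is to treat (\ref{spec-stab}) as a singular perturbation problem in which $\lambda$ is of order $\varepsilon$, and to set $\lambda = \varepsilon \Lambda$ with $\Lambda$ an unknown of order one. The $\mathcal{O}(\varepsilon^{1/2})$ scaling of the Hamiltonian square-root splitting is not the right one here: the partner of the translational mode $\phi_0'$ is the non-$L^2$ kernel element $\phi_\varepsilon$ of $\mathcal{L}_-(\varepsilon)$. Lemma \ref{lem-Evans} shows that the $v$-component of a small eigenvalue should look like $\phi_\varepsilon$ modulated by the slowly decaying factor $e^{-\nu_- x}$ with $\nu_- \approx \lambda/2$. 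Accordingly I will use the ansatz
\[
u = \phi_0'(\cdot - s) + \varepsilon \tilde u, \qquad v = c\, \phi_\varepsilon\, \chi_\lambda + \tilde v, \qquad \chi_\lambda(x) := \frac{1 + e^{\nu_- x}}{e^{\nu_- x}+ e^{2\nu_- x}} \ \text{(or any smooth interpolation between } 1 \text{ and } e^{-\nu_- x}),
\]
so that $\chi_\lambda \to 1$ as $x\to-\infty$ and $\chi_\lambda \sim e^{-\nu_- x}$ as $x\to+\infty$. The corrections $\tilde u,\tilde v$ will be sought in $H^2(\R)$ with the orthogonality conditions $\langle \tilde u, \phi_0'\rangle_{L^2}=0$ and $\langle \tilde v, \phi_0'\rangle_{L^2} = 0$. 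Here $c$ and $\Lambda$ are the two scalar unknowns, and the hypothesis $V\in L^1(\R)$ enters through the exponential convergence of $\phi_\varepsilon$ needed in Lemma \ref{lem-Evans}.

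First, I derive the two solvability conditions formally. Projecting the first equation of (\ref{spec-stab}) onto $\phi_0'$ and using the expansion (\ref{expansion-1}) together with (\ref{projection-1}), (\ref{projection-3}) and $\|\phi_0'\|_{L^2}^2=\frac14$ gives
\[
-\varepsilon V_{\rm eff}'(s_0) = -\lambda c \langle \phi_0, \phi_0'\rangle_{L^2} + \mathcal{O}(\varepsilon^2) = -\tfrac{1}{2}\lambda c + \mathcal{O}(\varepsilon^2).
\]
Next, I pair the second equation with $\phi_\varepsilon$ over $(-\infty,R)$ and let $R\to\infty$. Since $\mathcal{L}_-(\varepsilon)\phi_\varepsilon = 0$, the left side reduces to the boundary flux $-[v'\phi_\varepsilon - v\phi_\varepsilon']_{x=R}$. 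With $v \sim c\, e^{-\nu_- x}$ and $\phi_\varepsilon\to1$, this flux tends to $c\,\nu_- \approx c\lambda/2$; it is an $\mathcal{O}(\lambda)$ quantity that is invisible in any fixed $L^2$ pairing, which is exactly the embedded-eigenvalue effect. The right side is $\lambda\langle u,\phi_\varepsilon\rangle_{L^2} = \lambda\langle \phi_0',\phi_0\rangle_{L^2} + \mathcal{O}(\lambda\varepsilon) = \lambda/2 + \mathcal{O}(\lambda\varepsilon)$. This yields $c = 1 + \mathcal{O}(\varepsilon)$, and substituting into the first relation gives $\lambda = 2\varepsilon V'_{\rm eff}(s_0) + \mathcal{O}(\varepsilon^2)$. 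Because $V'_{\rm eff}(s_0)>0$ and all coefficients are real, the root is real and positive.

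Second, I make this rigorous with a Lyapunov--Schmidt argument at fixed $\Lambda$ in a neighbourhood of $2V'_{\rm eff}(s_0)$. The equation for $\tilde u$ is solved with $\Pi_0 L_+^{-1}\Pi_0$ exactly as in Theorem \ref{th-existence}. For $\tilde v$, the residual $\mathcal{L}_-(\varepsilon)(\phi_\varepsilon\chi_\lambda) = -2\phi_\varepsilon'\chi_\lambda' - \phi_\varepsilon\chi_\lambda''$ is $\mathcal{O}(\lambda)$ in $L^2$. This holds because $\phi_\varepsilon'$ is localized and $\chi_\lambda'' = \mathcal{O}(\nu_-^2)$ lives on the long scale $1/\nu_-$, so that $\|\chi_\lambda''\|_{L^2} = \mathcal{O}(\nu_-^{3/2})$. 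I then invert $\mathcal{L}_-(\varepsilon) + \lambda^2\mathcal{L}_+(\varepsilon)^{-1}$ (after eliminating $u$) on the complement of the two tracked directions, using Lemma \ref{lem-Evans} to confine the solution to the decaying subspaces at $\pm\infty$. An equivalent way to organize this step is to construct the decaying solutions on each half-line via Coddington--Levinson and match them at $x=0$; this produces an Evans-type function $D(\Lambda,\varepsilon)$ whose leading part is proportional to $\Lambda - 2V'_{\rm eff}(s_0)$. In either formulation, the two scalar equations above become a smooth system in $(c,\Lambda)$ with a nondegenerate Jacobian at $(1,2V_{\rm eff}'(s_0))$, and the implicit function theorem gives a unique real root $\Lambda(\varepsilon)$. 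Realness also follows from uniqueness combined with the symmetry $\lambda\mapsto\bar\lambda$.

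The main obstacle is the uniform control of $\tilde v$. For $\lambda\to0$ the operator $\mathcal{L}_-(\varepsilon)$ has no spectral gap at $0$, and $\|\phi_\varepsilon\chi_\lambda\|_{L^2}\sim \lambda^{-1/2}$ blows up, so the remainders must be measured in norms that do not see the tail. My first attempt will use the exponentially weighted space with weight $e^{-\nu x}$ on $x>0$ for some $0<\nu<\nu_-$, following the weighted-energy approach of \cite{HolmerKevrekidisPelinovsky2025}; in that space the embedded zero is shifted off the essential spectrum. The work then lies in showing that the inverse is bounded uniformly in $\lambda$ on the complement of $\phi_\varepsilon\chi_\lambda$. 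If this uniformity fails, I will fall back on the half-line matching formulation. That formulation avoids global inversion altogether, and the flux computation above reappears as the $\mathcal{O}(\lambda)$ Wronskian between the slowly decaying mode $e^{-\nu_- x}$ and the bounded mode $\phi_\varepsilon$.
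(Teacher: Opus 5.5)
\textbf{Your derivation of $c=1$ does not hold as written.} Your leading-order mechanism is the paper's. Projecting the $u$-equation onto $\phi_0'$ gives $\frac12 c\lambda=\varepsilon V_{\rm eff}'(s_0)+\mathcal{O}(\varepsilon^2)$, and $c=1$ has to come from the slow mode $e^{-\nu_-x}$ of Lemma \ref{lem-Evans}. The paper gets $c=1$ by matching the secular term $-\frac12\varepsilon\lambda_1(x-s)\phi_0$ of $v_1$ (which comes from $L_-[(x-s)\phi_0]=-2\phi_0'$) against $c(1-\frac12\lambda x)$.

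Your flux argument, which is the heart of the embedded-eigenvalue effect, goes wrong at the limit. For an actual eigenfunction with ${\rm Re}(\lambda)>0$, both $v$ and $v'$ decay. The flux $-[v'\phi_\varepsilon-v\phi_\varepsilon']_{x=R}\approx c\nu_-e^{-\nu_-R}$ therefore tends to $0$ as $R\to\infty$, not to $c\nu_-$. The exact identity is $\lambda\int_{\R}u\phi_\varepsilon\,dx=0$. Combined with your estimate $\langle u,\phi_\varepsilon\rangle_{L^2}=\frac12+\mathcal{O}(\varepsilon)$, it would force $\lambda=0$.

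That estimate is also false. Since $\mathcal{L}_+(\varepsilon)\to-\partial_x^2+4$ as $x\to+\infty$, the equation $\mathcal{L}_+(\varepsilon)u=-\lambda v$ forces a tail $u\approx-\frac{\lambda c}{4}e^{-\nu_-x}$; this is the vector $(-\lambda,4-\nu_-^2)$ in (\ref{plus-inf}). The tail has amplitude $\mathcal{O}(\lambda)$ but length $2\lambda^{-1}$, so it contributes $-\frac{\lambda c}{4\nu_-}\approx-\frac c2$ to $\int u\phi_\varepsilon\,dx$. The correct relation is $0=\frac12-\frac c2+\mathcal{O}(\varepsilon)$: your two errors cancel.

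You can repair the step in either of two ways:
\begin{itemize}
\item keep $R=\infty$ and carry the tail of $u$ explicitly; or
\item impose the Wronskian identity at an intermediate $1\ll R\ll\lambda^{-1}$, where $e^{-\nu_-R}\approx1$ and $\int_{-\infty}^R u\phi_\varepsilon\,dx=\frac12+o(1)$. This is exactly the paper's overlap matching in Wronskian form.
\end{itemize}

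\textbf{The same tail breaks your ansatz for $u$.} Taking $u=\phi_0'(\cdot-s)+\varepsilon\tilde u$ with $\tilde u$ of order one in $H^2$ is inconsistent. The forcing $-\lambda c\,\phi_\varepsilon\chi_\lambda$ has $L^2$ norm of order $\lambda^{1/2}$. Hence $\|u-\phi_0'\|_{H^2}$ is of order $\varepsilon^{1/2}$ and $\|u-\phi_0'\|_{L^1}$ is of order one. The nonuniformity you flagged for $v$ is present in $u$ as well and must be built into the ansatz. Incidentally, your formula for $\chi_\lambda$ simplifies to $e^{-\nu_-x}$, which does not tend to $1$ at $-\infty$; this is harmless because $\phi_\varepsilon$ decays like $e^{x}$ there.

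\textbf{On rigor.} The paper itself stops at the formal expansion and appeals to standard perturbation theory for persistence, so a genuine closure would go beyond it. Your first route does not give the uniform bound you need. With a weight of rate $\nu<\nu_-\approx\lambda/2$, the essential spectrum of the conjugated $L_-$ sits at distance $\nu^2=\mathcal{O}(\varepsilon^2)$ from $0$, so the inverse degenerates like $\varepsilon^{-2}$. The viable route is the half-line matching: an Evans function built with the gap lemma to handle the coalescence of $\pm\nu_-$ at $\lambda=0$. There, the nondegeneracy of $D(\Lambda,\varepsilon)$ at $\Lambda=2V_{\rm eff}'(s_0)$ that you assert is precisely the corrected flux computation above, and it still has to be carried out.
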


\begin{proof}
We use the expansion (\ref{expansion-1}) for $\mathcal{L}_+(\varepsilon)$ defined by (\ref{L-plus}). Similarly for $\mathcal{L}_-(\varepsilon)$ defined by (\ref{L-minus}), we expand 
\begin{equation}
\label{expansion-2}
\mathcal{L}_-(\varepsilon) = L_- + \varepsilon L_-^{(1)} +  \tilde{\mathcal{L}}_-(\varepsilon),
\end{equation}
where 
$$
L_-^{(1)} = V - 8 \phi_0(\cdot - s) w_1 + 12 \phi_0^3(\cdot - s) w_1,
$$
and $\tilde{\mathcal{L}}_-(\varepsilon)$ contains the remainder terms from $-4 \phi_{\varepsilon}^2 + 3 \phi_{\varepsilon}^4$ satisfying $\| \tilde{\mathcal{L}}_-(\varepsilon) \|_{H^2} \leq C \varepsilon^2$ for some $C > 0$ independently of $\varepsilon$ in $(-\varepsilon_0,\varepsilon_0)$. 

We obtain a solution $(u,v) \in H^2(\R) \times H^2(\R)$ of the spectral stability problem (\ref{spec-stab}) for an eigenvalue $\lambda = \lambda(\varepsilon)$ satisfying ${\rm Re}(\lambda) > 0$ by using the following asymptotic expansion:
\begin{equation}
\label{expansion-eigenvector}
\begin{cases}
u = \phi_0' + \varepsilon u_1 + \mathcal{O}(\varepsilon^2), \\
v = c \phi_0 + \varepsilon v_1 + \mathcal{O}(\varepsilon^2), \\
\lambda = \varepsilon \lambda_1 + \mathcal{O}(\varepsilon^2),
\end{cases}
\end{equation}
where $\phi_0 \equiv \phi_0(\cdot - s)$,  $\phi_0' \equiv \phi_0'(\cdot - s)$, the coefficient $c \in \mathbb{C}$ is to be defined, and the first-order correction $(u_1,v_1,\lambda_1) \in H^2(\R) \times H^2(\R) \times \C$ is  to be found. Since $L_+ \phi'_0 = 0$ and $L_- \phi_0 = 0$, we obtain from the expansions (\ref{expansion-1}), (\ref{expansion-2}), and (\ref{expansion-eigenvector}) in the spectral problem (\ref{spec-stab})  
that the leading-order equations at the order of $\mathcal{O}(\varepsilon)$ are given by 
\begin{equation}
\label{system-1}
\mathcal{O}(\varepsilon) : \quad \begin{cases}
L_+ u_1 + L_+^{(1)}  \phi_0' = - c \lambda_1 \phi_0, \\
L_- v_1 + c L_-^{(1)}  \phi_0 = \lambda_1 \phi_0'. 
\end{cases}
\end{equation}

Since expansion of $\mathcal{L}_-(\varepsilon) \phi_{\varepsilon} = 0$ at the order of $\mathcal{O}(\varepsilon)$ yields 
$$
L_- w_1 + L_-^{(1)}  \phi_0 = 0,
$$
where $w_1$ is a solution of (\ref{w-1}), we obtain a closed solution of the second equation of the system (\ref{system-1}) in the form
$$
v_1 = c w_1 - \frac{1}{2} \lambda_1 (x-s) \phi_0 + c_1 \phi_0 + c_2 v_0,
$$
where $\{ \phi_0,v_0\}$ are two linearly independent solutions 
of the second-order equation $L_- v = 0$ and $\{ c_1,c_2\}$ are arbitrary constants. Since solutions of $L_- v = 0$ are spanned by $\{ e^x, e^{-x}\}$ as $x \to -\infty$ and by $\{ 1,x\}$ as $x \to +\infty$, it follows from the fact that $\phi_0(x) \to 0$ converges exponentially as $x \to -\infty$ that 
$v_0(x)$ diverges exponentially as $x \to -\infty$. Hence, we set $c_2 = 0$ to get the exponential convergence of $v_1(x) \to 0$ as $x \to -\infty$.

The expansion for $v$ is written so far as 
$$
v = \phi_0 \left( c -\frac{1}{2} \varepsilon \lambda_1 (x-s) + \varepsilon c_1 \right) + c \varepsilon w_1 + \mathcal{O}(\varepsilon^2),
$$
where $w_1 \in H^2(\R)$ decays to $0$ at both infinities. Since $\phi_0(x) \to 1$ as $x \to +\infty$, we obtain the expansions for large $x$:
\begin{equation}
\label{limit-1}
v(x) \sim c -\frac{1}{2} \varepsilon \lambda_1 (x-s) + \varepsilon c_1 + \mathcal{O}(\varepsilon^2) \quad \mbox{\rm as} \;\; x \to +\infty.
\end{equation}
By Lemma \ref{lem-Evans}, we have $\nu_- = \frac{1}{2} \lambda + \mathcal{O}(\lambda^3) \to 0$ as $\lambda \to 0$ if ${\rm Re}(\lambda) > 0$ (to be determined). The spanning set (\ref{plus-inf}) suggests that the $v$ component is expanded as 
\begin{equation}
\label{limit-2}
v(x) \sim c e^{-\nu_- x} = c \left( 1 - \frac{1}{2} \lambda x + \mathcal{O}(\lambda^2 x^2) \right)
\end{equation}
since $e^{-\nu_+ x}$ with $\nu_+ = 2 + \mathcal{O}(\lambda^2)$ converges to $0$ exponentially fast as $x \to +\infty$. Comparing (\ref{limit-1}) and (\ref{limit-2}), we conclude that $c = 1$ and $c_1 = -\frac{1}{2} s_0$ 
since $s(\varepsilon) = s_0 + \mathcal{O}(\varepsilon)$ from the bound (\ref{bound-on-s}) and $\lambda = \varepsilon \lambda_1 + \mathcal{O}(\varepsilon^2)$.

Projecting the first equation of the system (\ref{system-1}) to $\phi_0'$ yields 
\begin{equation}
\label{projection-2}
\langle L_+^{(1)} \phi_0', \phi_0' \rangle_{L^2} = -c \lambda_1 \langle \phi_0, \phi_0' \rangle.
\end{equation}
Since $c = 1$ and $\langle \phi_0, \phi_0' \rangle_{L^2}= \frac{1}{2}$, we use (\ref{projection-1}), (\ref{projection-3}), and (\ref{projection-2}) to obtain 
$$
\lambda_1 = -2 \langle L_+^{(1)} \phi_0', \phi_0' \rangle_{L^2} 
= 2 \langle V' \phi_0,\phi_0' \rangle = 2 V_{\rm eff}'(s).
$$
Since $V_{\rm eff}'(s) = V_{\rm eff}'(s_0) + \mathcal{O}(\varepsilon)$ and $V_{\rm eff}'(s_0) > 0$, we confirm that ${\rm Re}(\lambda) > 0$ 
for the eigenvalue $\lambda = \varepsilon \lambda_1 + \mathcal{O}(\varepsilon^2)$. Persistence of this isolated eigenvalue outside the continuous spectrum of the spectral problem (\ref{spec-stab}) for $\varepsilon \in (0,\varepsilon_0)$ is proven by the standard methods of the perturbation theory. 
\end{proof}

\begin{remark}
	If $\varepsilon \neq 0$, the translational symmetry is broken and this leads to the bifurcation of a nonzero eigenvalue $\lambda(\varepsilon)$ in the spectral stability problem (\ref{spec-stab}) from the zero eigenvalue that exists for $\varepsilon = 0$. If the zero eigenvalue at $\varepsilon = 0$ is isolated from the continuous spectrum, as happens for bright solitons \cite{Kapitula}, then the standard Puiseux expansions provide approximations of the perturbed eigenvalue $\lambda(\varepsilon)$ with the leading-order behavior $\lambda(\varepsilon) = \mathcal{O}(\varepsilon^{1/2})$. However, the zero eigenvalue is embedded into the continuous spectrum for the kinks and we have shown in Theorem \ref{th-stability} that the leading-order behavior of the unstable eigenvalue is given by $\lambda(\varepsilon) = \mathcal{O}(\varepsilon)$. 
\end{remark}

\begin{remark}
	By the Hamiltonian symmetry of isolated eigenvalues of the spectral stability problem (\ref{spec-stab}), if there exists the unstable eigenvalue $\lambda(\varepsilon) > 0$ with the eigenfunction $(u,v) \in H^2(\R) \times H^2(\R)$, then there exists another stable eigenvalue $-\lambda(\varepsilon) < 0$ with the eigenfunction $(u,-v) \in H^2(\R) \times H^2(\R)$ so that a pair of real eigenvalues $\pm \lambda(\varepsilon)$ exists in the case of $V_{\rm eff}'(s_0) > 0$.
\end{remark}

\begin{remark}
	\label{rem-stab}
	If $V_{\rm eff}'(s_0) < 0$, the leading-order asymptotics of $\lambda(\varepsilon) = 2 \varepsilon V_{\rm eff}'(s_0) + \mathcal{O}(\varepsilon^2)$ as $\varepsilon \to 0$ would imply $\lambda(\varepsilon) < 0$. However, the derivation of the leading-order asymptotics relies on Lemma \ref{lem-Evans} which only holds for ${\rm Re}(\lambda) > 0$. As a result, we cannot justify the asymptotic expansion (\ref{expansion-eigenvector}) and prove the existence of a real negative eigenvalue in the case of  $V_{\rm eff}'(s_0) < 0$.
\end{remark}

\begin{remark}
	\label{rem-particle}
	Extrapolating the linear instability result of Theorem \ref{th-stability}, we conjecture that the nonlinear dynamics of the modulated kink solutions $\{ \phi_{\varepsilon}(\cdot - s(t))\}_{t > 0}$ is given by the viscous particle equation 
\begin{equation}
\label{viscous-particle}
	\frac{ds}{dt} = 2 \varepsilon V_{\rm eff}(s), \quad t > 0.
\end{equation}
The particle equation (\ref{viscous-particle}) describes the source $s_0$ if $V_{\rm eff}(s_0) = 0$, $V_{\rm eff}'(s_0) > 0$ and the sink $s_0$ if $V_{\rm eff}(s_0) = 0$, $V_{\rm eff}'(s_0) < 0$ for $\varepsilon \in (0,\varepsilon_0)$. The problem of asymptotic stability of the kink in the case of  $V_{\rm eff}'(s_0) < 0$ remains open for further studies.
\end{remark}

\section{Lack of parity symmetry for kinks}
\label{sec-5}

The exact solution (\ref{exact-kink}) for $\varepsilon = 0$ admits the odd symmetry of $\phi_0^2$ relative to the level $\frac{1}{2}$ as is seen from the explicit expression
$$
v_0(x) := 2\phi_0^2(x) - 1 = \tanh(x) = -\tanh(-x).
$$
Here, we show that no odd parity symmetry of $v_{\varepsilon} := 2 \phi_{\varepsilon}^2 - 1$ is preserved for $\varepsilon \neq 0$ even if $V$ is odd. We start with a technical lemma.

\begin{lemma}
	\label{lem-constraint}
Assume that $V \in L^2(\mathbb{R})$ and that there exists a family $\{ \phi_{\varepsilon} \}_{\varepsilon \in (-\varepsilon_0,\varepsilon_0)}$ of kink solutions in Theorem \ref{th-existence}. Then, 
\begin{equation}
\label{existence}
\int_{\R} V(x) \phi_{\varepsilon}(x) \phi_{\varepsilon}'(x) dx = 0.
\end{equation}
\end{lemma}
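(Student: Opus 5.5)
Multiplying the stationary equation (\ref{ode}) by $\phi_\varepsilon'$ and integrating over $\R$ gives (\ref{existence}) once the remaining terms are shown to be exact derivatives of quantities with equal limits at $\pm\infty$. This is the standard Pohozaev/first-integral argument. Write the nonlinearity as $f(\phi) = \phi(1-\phi^2)(1-3\phi^2) = W'(\phi)$ with $W(\phi) = \frac12\phi^2(1-\phi^2)^2$. Then, for $\phi_\varepsilon$ from Theorem \ref{th-existence},
$$
\frac{d}{dx}\left[ \frac12 (\phi_\varepsilon')^2 - W(\phi_\varepsilon) \right] = \varepsilon V \phi_\varepsilon \phi_\varepsilon'
$$
pointwise. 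This uses $\phi_\varepsilon \in H^2_{\rm loc}$, so that $\phi_\varepsilon'$ is absolutely continuous and the identity holds almost everywhere with both sides locally integrable.

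Next I integrate over $[-R,R]$ and let $R\to\infty$. The right-hand side converges to $\varepsilon\int_\R V\phi_\varepsilon\phi_\varepsilon' dx$ because $V \in L^2$ and $\phi_\varepsilon\phi_\varepsilon' \in L^2$. Indeed, $\phi_\varepsilon$ is bounded and $\phi_\varepsilon' = \phi_0'(\cdot-s_0) + (\phi_\varepsilon - \phi_0(\cdot-s_0))' \in L^2$ by (\ref{bound-kink}), so Cauchy--Schwarz gives absolute integrability. For the boundary terms, (\ref{bound-kink}) and the exponential decay of $\phi_0'$ give $\phi_\varepsilon' \in H^1(\R)$, hence $\phi_\varepsilon'(x)\to0$ as $|x|\to\infty$. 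Also $\phi_\varepsilon\to 0$ at $-\infty$ and $\phi_\varepsilon\to1$ at $+\infty$, and $W(0)=W(1)=0$. Thus the bracket tends to $0$ at both ends, and $\varepsilon \int_\R V\phi_\varepsilon\phi_\varepsilon' dx = 0$.

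This yields (\ref{existence}) for $\varepsilon\neq0$. For $\varepsilon = 0$, (\ref{existence}) reads $V_{\rm eff}(s_0)=0$, which holds by the choice of $s_0$ as a root of (\ref{eff-potential}) using $\phi_0\phi_0' = \frac14{\rm sech}^2$; alternatively it follows by continuity in $\varepsilon$. The only delicate point is justifying the limits at infinity with merely $V\in L^2$. This is handled by the Sobolev embedding $H^1\subset C_0$ applied to $\phi_\varepsilon - \phi_0(\cdot - s_0)$ and its derivative, so no exponential decay, and hence no $L^1$ assumption on $V$, is needed.
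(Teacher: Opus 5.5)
Your proof is correct and follows the same route as the paper. Both use the first integral $\frac{d}{dx}\bigl[(\phi')^2 - \phi^2(1-\phi^2)^2\bigr] = 2\varepsilon V\phi\phi'$, integrated over $\R$, with the boundary terms vanishing at both ends because $\phi_\varepsilon\to 0,1$ and $\phi_\varepsilon'\to 0$. Your explicit justification of $\phi_\varepsilon'(x)\to 0$ via $H^1(\R)\subset C_0(\R)$, and your separate treatment of $\varepsilon=0$ (where the identity reduces to $V_{\rm eff}(s_0)=0$), fill in details the paper leaves implicit.
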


\begin{proof}
It follows from (\ref{ode}) that 
$$
\frac{d}{dx} \left[ (\phi')^2 - \phi^2 (1-\phi^2)^2 \right] = 2 \varepsilon V \phi \phi'. 
$$
Due to the boundary conditions $\phi(x) \to 0$ as $x \to -\infty$, we can write 
\begin{equation}
\label{first}
(\phi')^2 - \phi^2 (1-\phi^2)^2 = 2 \varepsilon \int_{-\infty}^x V(y) \phi(y) \phi'(y) dy,
\end{equation}
where the integral is well defined in $L^{\infty}(\R)$ since $V, \phi \phi' \in L^2(\R)$. Using the boundary condition $\phi(x) \to 1$ as $x \to +\infty$ yields the constraint (\ref{existence}). 
\end{proof}

\begin{remark}
	If $V(x) \geq 0$ and $\phi'(x) \geq 0$ for all $x \in \R$, then the existence condition (\ref{existence}) is not satisfied and no kinks exist.
\end{remark}

\begin{remark}
	If $V$ is odd and $v := 2\phi^2 - 1$ is odd, then $v' = 4 \phi \phi'$ is even and the existence condition (\ref{existence}) is satisfied. However, we show in Theorem \ref{th-parity} that no odd solution exists for $v$ even if $V$ is odd.
\end{remark}

\begin{theorem}
	\label{th-parity} 
Assume that $V \in H^1(\mathbb{R})$ is odd and that there exists a family $\{ \phi_{\varepsilon} \}_{\varepsilon \in (-\varepsilon_0,\varepsilon_0)}$ of kink solutions in Theorem \ref{th-existence}. Then, $v_{\varepsilon} := 2 \phi_{\varepsilon}^2 - 1$ is not odd.
\end{theorem}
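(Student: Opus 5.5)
The plan is to derive a closed second-order equation for $v_\varepsilon = 2\phi_\varepsilon^2 - 1$, split it into odd and even parts under the assumption that both $V$ and $v_\varepsilon$ are odd, and show that the even part forces $V \equiv 0$. Throughout, $\varepsilon \neq 0$ and $V \not\equiv 0$; the statement must be read this way, since for $\varepsilon = 0$ or $V \equiv 0$ we have $v_\varepsilon = \tanh(\cdot - s)$, which is odd for $s = 0$. Write $\phi = \phi_\varepsilon$ and $p = \phi^2$, so $v = 2p - 1$ and $v' = 4\phi\phi'$.

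\textbf{Step 1: an equation for $v$.} By Theorem \ref{th-existence} we have $\phi \in H^2_{\rm loc}$, and (\ref{ode}) gives $\phi'' \in L^2_{\rm loc}$. Hence $v'' = 4(\phi')^2 + 4\phi\phi''$ holds in the weak sense. Substitute $(\phi')^2$ from the first integral (\ref{first}) in the proof of Lemma \ref{lem-constraint}, and $\phi\phi''$ from (\ref{ode}). Using $1 - p = \frac{1-v}{2}$ and $p(1-p)\big[(1-p) + (1-3p)\big] = -\tfrac12 v(1-v^2)$, I expect to obtain
\begin{equation*}
v'' = -2v(1-v^2) + 2\varepsilon V(1+v) + 2\varepsilon I(x), \qquad I(x) := \int_{-\infty}^x V(y)\, v'(y)\, dy,
\end{equation*}
where $\int_{-\infty}^x V\phi\phi'\,dy = \tfrac14 I(x)$.

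\textbf{Step 2: parity bookkeeping.} Suppose $v$ is odd. Then $v''$, $-2v(1-v^2)$ and $2\varepsilon V$ are odd, while $Vv$ is even. Since $v'$ is even, $Vv'$ is odd. By Lemma \ref{lem-constraint}, $I(+\infty) = 0$, and therefore
\begin{equation*}
I(-x) = -\int_x^{\infty} V v'\,dy = I(x),
\end{equation*}
so $I$ is even. Taking the even part of the equation and dividing by $2\varepsilon \neq 0$ gives $V v + I \equiv 0$ on $\R$.

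\textbf{Step 3: forcing $V \equiv 0$.} Since $V \in H^1(\R)$ is continuous and $v \in C^1$, the function $I$ is $C^1$ with $I' = V v'$. Differentiating the identity $Vv + I = 0$ (in the $H^1_{\rm loc}$ sense) gives
\begin{equation*}
V' v + 2 V v' = 0.
\end{equation*}
Multiplying by $v$ turns this into $(V v^2)' = 0$, so $V v^2$ is constant. As $x \to +\infty$ we have $V(x) \to 0$ (because $V \in H^1$) and $v(x) \to 1$, so the constant is $0$. Hence $V = 0$ wherever $v \neq 0$. Since $v$ is continuous and close to $\tanh(\cdot - s)$ in $H^2 \subset C^1$, its zero set is a single point. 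By continuity $V \equiv 0$, which is a contradiction.

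The main obstacle is Step 1. It requires a careful algebraic reduction and a justification of the weak regularity needed to substitute (\ref{first}) and (\ref{ode}), together with recognizing that the nonlocal term $I$ is even, which uses exactly the constraint (\ref{existence}). Once the equation for $v$ is in this form, the parity split and the ODE argument in Step 3 are routine.
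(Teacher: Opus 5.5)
Your proposal is correct and follows essentially the same route as the paper: the same closed equation (\ref{closed}) for $v$, the same parity split in which the constraint (\ref{existence}) makes $I$ even, and the same conclusion that $Vv^2$ must be constant. Your Step 3 is a bit cleaner than the paper's version, which divides by $v'$ and $v$ to get $W = c/v$ and fixes $c=0$ by oddness. Differentiating $Vv + I = 0$ and multiplying by $v$ avoids those divisions, and you correctly conclude $V=0$ off the zero set of $v$ rather than ``$V\equiv 0$ or $v\equiv 0$''.
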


\begin{proof}
Substituting $\psi := \phi^2$ into (\ref{ode}) and using (\ref{first}) yields 
\begin{align*}
\psi'' &= 2 (\phi')^2 + 2\psi (1- \psi) (1- 3 \psi) + 2 \varepsilon V \psi \\
&= 2 \psi (1-\psi)^2 + 2\psi (1- \psi) (1- 3 \psi) + 2 \varepsilon V \psi + 2 \varepsilon \int_{-\infty}^x V(y) \psi'(y) dy\\
&= 4 \psi (1-\psi) (1- 2 \psi) + 2 \varepsilon V \psi + 2 \varepsilon \int_{-\infty}^x V(y) \psi'(y) dy.
\end{align*}
Substitute $v := 2\psi - 1$ yields a closed equation,
\begin{align}
\label{closed}
v'' = -2 (1-v^2) v + 2 \varepsilon V (1+v) + 2 \varepsilon \int_{-\infty}^x V(y) v'(y) dy.
\end{align}
For $\varepsilon = 0$, this equation $v'' = -2(1-v^2) v$ 
yields the odd solution $v_0(x) = \tanh(x)$. If $\varepsilon \neq 0$, we assume that both $V$ and $v$ are odd 
and obtain a contradiction from (\ref{closed}). Indeed, if both $V$ and $v$ are odd, then $-2 (1-v^2) v$ and $2 \varepsilon V$ are odd, whereas $2 \varepsilon V v$ and $2\varepsilon \int_{-\infty}^x V(y) v'(y) dy$ are even. To show that the integral term is even, we denote $I(x) := \int_{-\infty}^x V(y) v'(y) dy$ and obtain 
\begin{align*}
I(-x) &= \int_{-\infty}^{-x} V(y) v'(y) dy = -\int_{+\infty}^{x} V(-y) v'(-y) dy \\
&= \int_{+\infty}^x V(y) v'(y) dy = \int_{-\infty}^x V(y) v'(y) dy = I(x),
\end{align*}
where we have used the constraint (\ref{existence}) and the evenness of $v'$. Furthermore, the even terms in (\ref{closed}) are nonzero, since if they are zero, we would get 
\begin{align}
\label{integral-const}
V v + \int_{-\infty}^x V(y) v'(y) dy = 0.
\end{align}
If $V \in H^1(\R)$, then $V \in C^0(\R) \cap L^{\infty}(\R)$ and $W \in C^1(\R)$, where $W(x) := \int_{-\infty}^x V(y) v'(y) dy$. Therefore, we get 
\begin{align*}
\frac{v}{v'} W' + W = 0 \quad 
\Rightarrow\quad W = \frac{c}{v}  \quad 
\Rightarrow \quad V = -\frac{c}{v^2},
\end{align*}
where $c$ is an arbitrary constant. However, the only constant for the odd function $V v^2$ is $0$, hence $V \equiv 0$ or $v \equiv 0$ are required to satisfy (\ref{integral-const}). The former case is trivial and the latter case contradicts the convergence $v(x) \to \pm 1$ as $x \to \pm \infty$. Thus, the right-hand side of (\ref{closed}) is not closed in the space of odd functions if $V$ and $v$ are odd, since the even term 
$2 \varepsilon (Vv + I)$ is nonzero. Hence, $v$ is not odd.
\end{proof}

\section{Numerical computations}
\label{sec:numerics}

We now present numerical computations that illustrate the persistence and stability of kinks in external potentials. For the first two examples we use
\begin{align}
\label{potential-1}
V(x)=\pm x e^{-x^2}, \quad \varepsilon = 1,
\end{align}
for which the value of $\varepsilon$ is not small.  For the remaining examples we use
\begin{align}
\label{potential-2}
V(x) = (1-a x^2){\rm sech}^2(x), \quad \varepsilon = 0.1.
\end{align}

Before we delve into the numerical details, it is also worth
highlighting the following point.
The stationary kink is computed on a large finite interval
with the second derivative approximated via
a centered finite-difference discretization and using a Newton iteration
scheme. The boundary values $\phi=0$ and $\phi=1$ are set as the Dirichlet conditions at the left and right endpoints, respectively.  
The linear operators $\mathcal{L}_+(\varepsilon)$ and $\mathcal{L}_-(\varepsilon)$ are then constructed using the numerical approximation for the kink profile.  For the spectral stability problem~(\ref{spec-stab}), one needs to take special care of controlling the computational domain due to the 
slow exponential decay of eigenfunctions as $x\to+\infty$,
\[
\nu_-=\frac{\lambda}{2}+\mathcal{O}(\lambda^3),
\]
prescribed by Lemma~\ref{lem-Evans}.  Hence, when the unstable eigenvalue $\lambda$ is small (as will be seen to be the case in our
numerical examples below), its eigenfunction has a large characteristic decay length of order $2 \lambda^{-1}$. Given the divergence of this length scale as $\lambda \rightarrow 0$, we opt to compute the isolated eigenvalues on an extended nonuniform domain whose right endpoint is placed several slow-tail decay lengths away from the kink.  This avoids the displacement of a small real eigenvalue caused by a fixed computational box.

\begin{figure}[htbp]
	\centering
	\includegraphics[width=0.475\linewidth,height=0.4\textwidth]{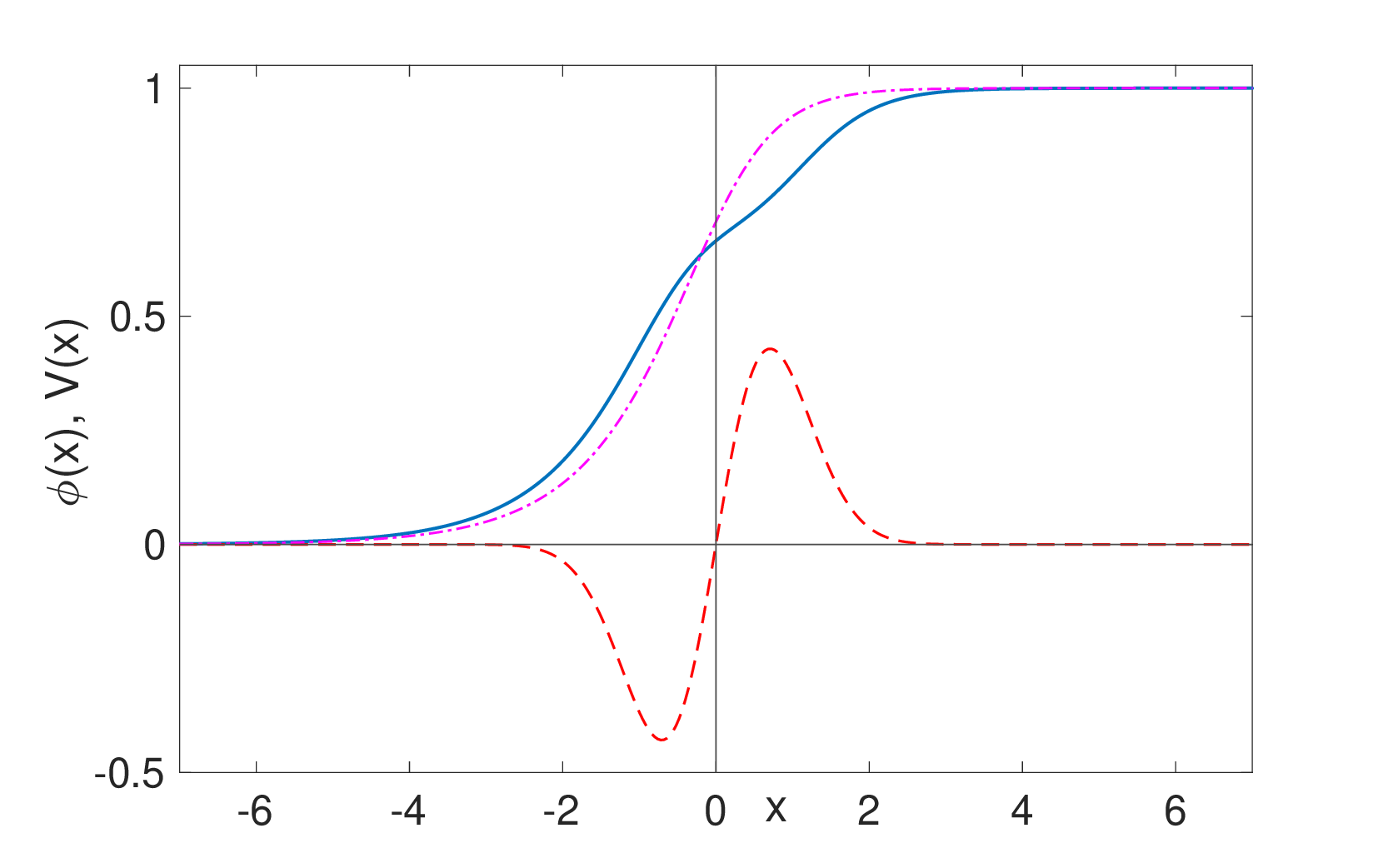}
	\includegraphics[width=0.475\linewidth,height=0.4\textwidth]{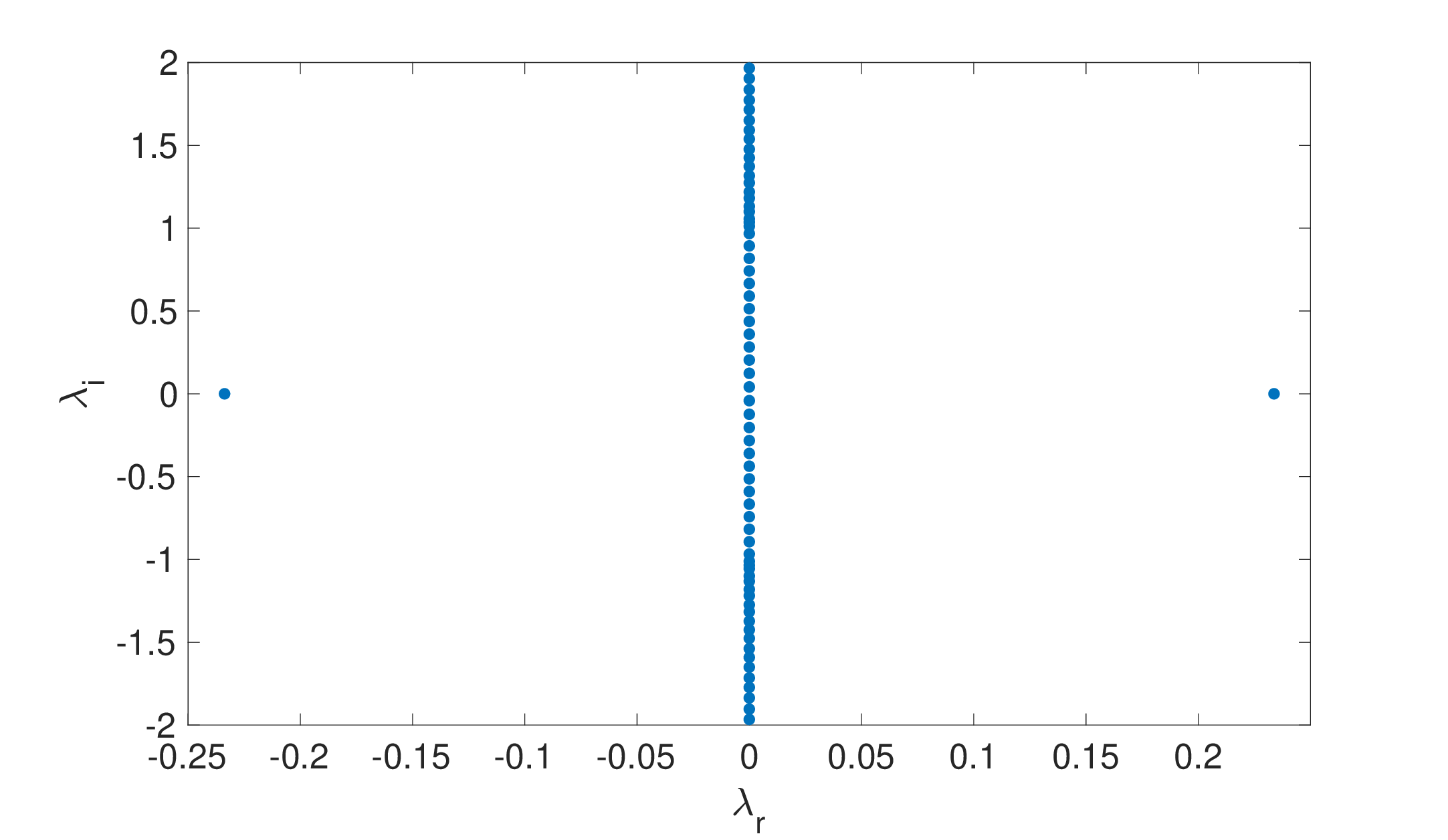}
	\caption{Numerical results for the potential $V(x)=xe^{-x^2}$. Left:  the perturbed kink profile $\phi$ (thick solid line), the potential $V$ (red dashed line), and the unperturbed kink $\phi_0$ for comparison.  Right: the complex plane $(\lambda_r,\lambda_i)$ of eigenvalues
		$\lambda=\lambda_r + i \lambda_i$ of the spectral stability problem (\ref{spec-stab}).}
	\label{fig:odd-plus}
\end{figure}

\begin{figure}[htbp]
	\centering
	\includegraphics[width=0.475\linewidth,height=0.4\textwidth]{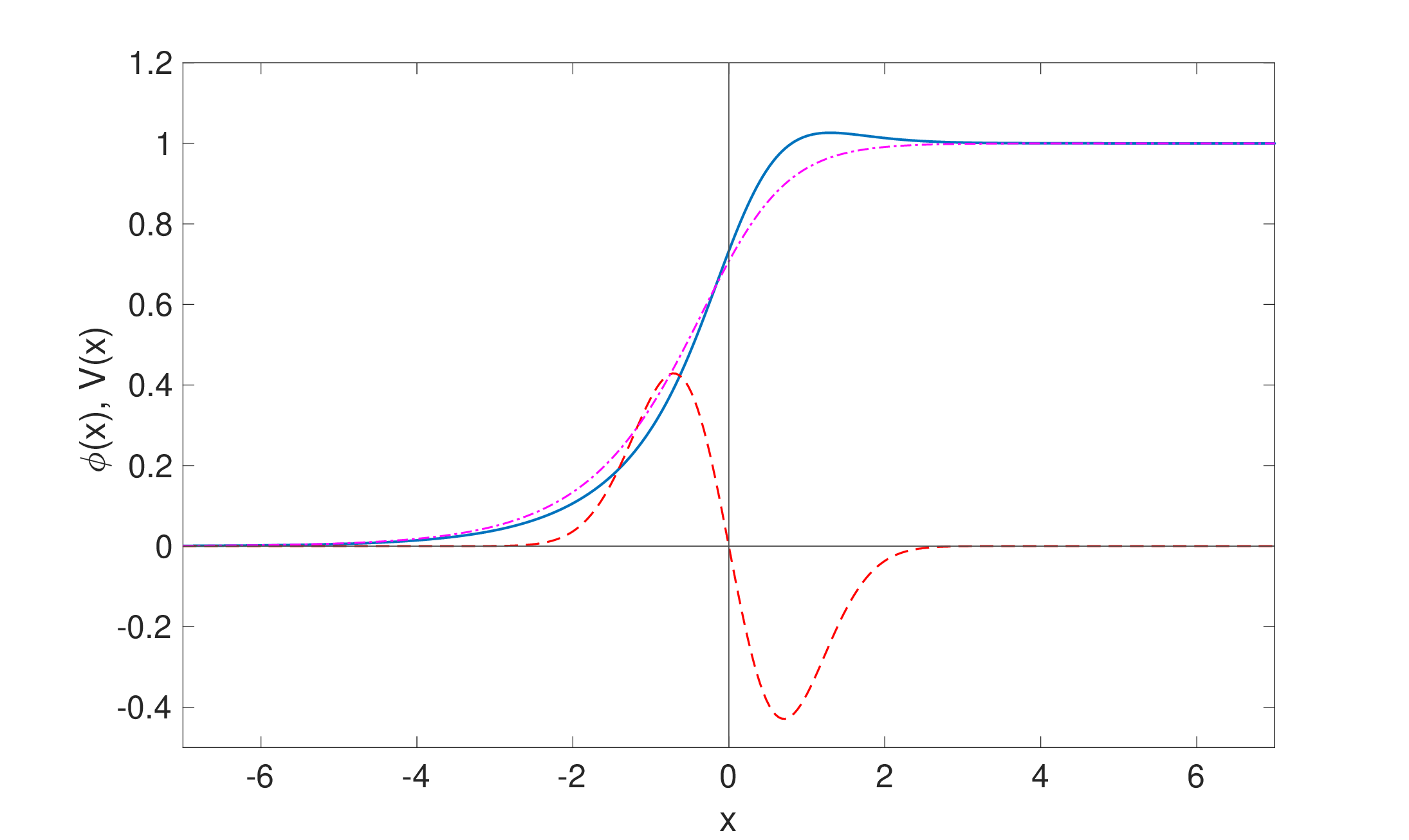}
	\includegraphics[width=0.475\linewidth,height=0.4\textwidth]{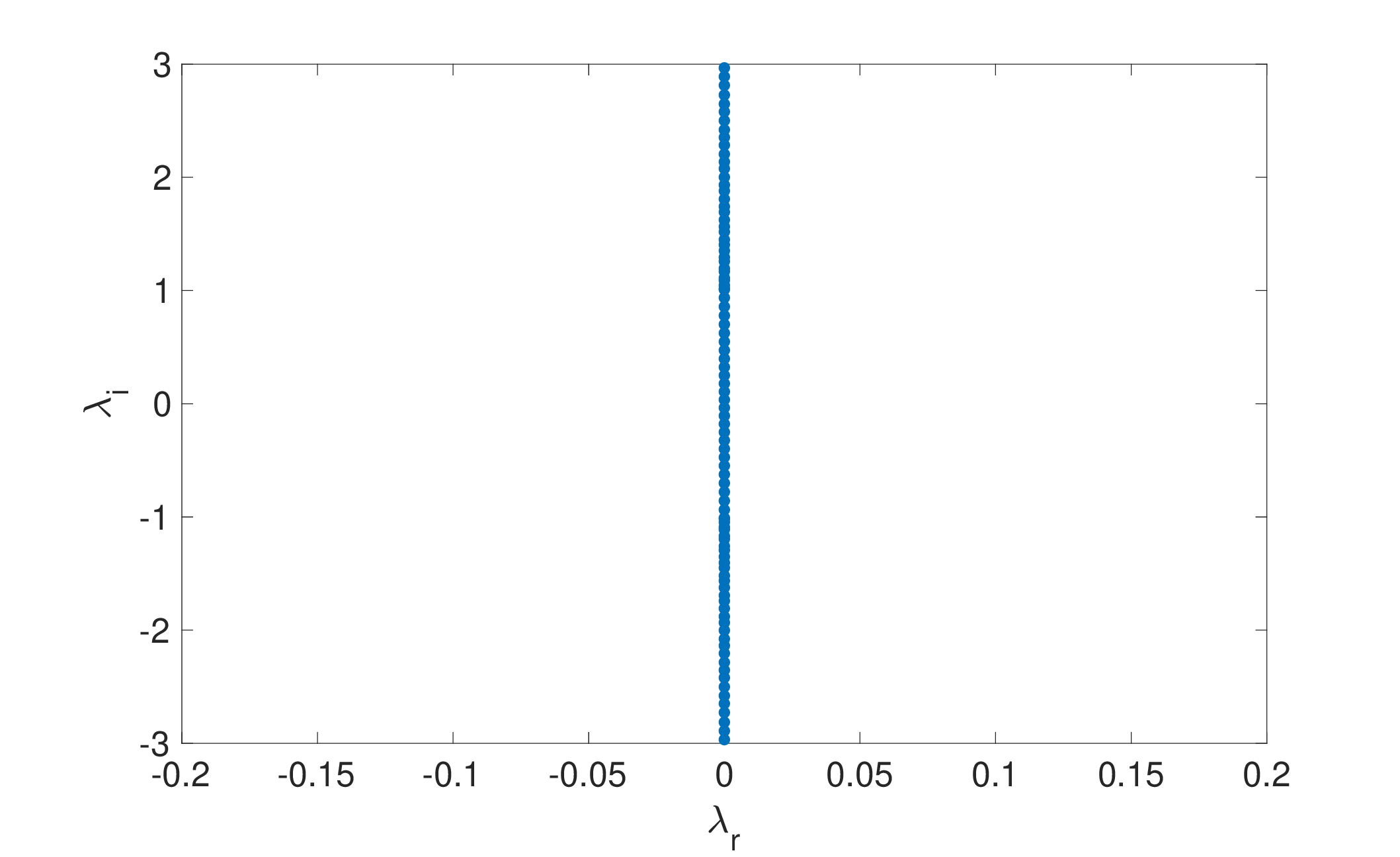}
	\caption{The same as Figure \ref{fig:odd-plus} but for the potential $V(x) = -xe^{-x^2}$. }
	\label{fig:odd-minus}
\end{figure}

\subsection{Kinks in the potential (\ref{potential-1})}

Figure~\ref{fig:odd-plus} presents the result for $V(x)=x e^{-x^2}$.  Since the potential $V$ is odd, $s_0=0$ is the root of $V_{\rm eff}$ in Theorem \ref{th-existence}.  Direct evaluation gives $V_{\rm eff}'(0) > 0$ so that 
Theorem~\ref{th-splitting} predicts that the translational eigenvalue (formerly at $0$ in the unperturbed case $\varepsilon = 0$)
becomes a negative eigenvalue of $\mathcal{L}_+(\varepsilon)$ for $\varepsilon > 0$ given approximately by 
\begin{equation}
\label{eig-L-plus}
\mu_+(\varepsilon) \approx - 4 \varepsilon V_{\rm eff}'(s_0).
\end{equation}
Theorem~\ref{th-stability} predicts a real unstable eigenvalue of the spectral stability problem (\ref{spec-stab}) given  approximately by 
\begin{equation}
\label{eig-spec-stab}
\lambda(\varepsilon) \approx + 2 \varepsilon V_{\rm eff}'(s_0).
\end{equation}
Theorem \ref{th-parity} suggests that the kink profile $\phi$ does not satisfy the odd parity symmetry in the component $v = 2 \phi^2 - 1$.

All of these analytical results are in direct parallel to numerical observations. The relevant numerical eigenvalue of $\mathcal{L}_+$ is approximately $\mu_+ \approx -0.613$, while the positive numerical eigenvalue within the real eigenvalue pair is $\lambda\approx0.234$.
If the leading-order formulas are evaluated from (\ref{eig-L-plus}) and (\ref{eig-spec-stab}) with $\varepsilon = 1$, they give $\mu_+\approx-0.575$ and $\lambda\approx0.288$.
While the asymptotic formulas are not expected to be quantitatively sharp here since $\varepsilon = 1$, they correctly predict both the sign and the order of magnitude for the relevant eigenvalues. In addition, it is clear from Figure \ref{fig:odd-plus} that the component $v = 2 \phi^2 - 1$ is not odd in $x$ but the profile $\phi$ is still monotonically increasing and positive. By the Sturm theorem for the operator $\mathcal{L}_-(\varepsilon)$ defined by (\ref{L-minus}),  
the spectrum of $\mathcal{L}_-(\varepsilon)$ is positive semi-definite, which follows from $\mathcal{L}_-(\varepsilon) \phi = 0$ with the positive kink profile $\phi$.

Changing the sign of the odd potential reverses the stability conclusion, 
in agreement with the theory.  Figure~\ref{fig:odd-minus} presents the result for $V(x)=-x e^{-x^2}$.  Again $s_0=0$ for the root of $V_{\rm eff}$  in Theorem \ref{th-existence} and we have now $V_{\rm eff}'(0)<0$.  Theorem~\ref{th-splitting} predicts 
$\mu_+ \approx 0.575$ in reasonable agreement with the numerical data 
$\mu_+ \approx 0.527$, again recalling that  $\varepsilon = 1$.  The spectral stability  problem (\ref{spec-stab}) does not have eigenvalues outside the purely imaginary axis, consistent with Remark \ref{rem-stab}. Furthermore,  the component $v = 2 \phi^2 - 1$ is not odd in $x$, in agreement with Theorem \ref{th-parity}. Since the kink profile $\phi$ is still positive, although it is not monotonically increasing, the Sturm theorem still predicts that the spectrum of $\mathcal{L}_-(\varepsilon)$ is positive semi-definite.

\subsection{Kinks in the potential (\ref{potential-2})}

We first set $a=1$ in (\ref{potential-2}) and show the numerical results in Figure \ref{fig:even}. Two symmetric roots $s_0 \approx \pm1.498$ are computed from $V_{\rm eff}$ 
given by (\ref{eff-potential}), in good agreement with the numerical data. 

At the negative root $s_0$, 
we evaluate $V_{\rm eff}'(s_0) \approx 0.135 > 0$ so that 
the theoretical predictions (\ref{eig-L-plus}) and (\ref{eig-spec-stab}) 
give $\mu_+ \approx-0.0540$ 
and $\lambda \approx 0.0270$, in close agreement with the numerical data 
$\mu_+ \approx -0.0540$ and $\lambda \approx 0.0256$. 
At the positive root of $V_{\rm eff}$ we evaluate 
$V_{\rm eff}'(s_0) \approx -0.135<0$ so that 
$\mu_+ \approx 0.054$ in  close agreement with the numerical data 
$\mu_+ \approx 0.0528$. No eigenvalues exist outside the purely imaginary axis 
in (\ref{spec-stab}).

Next, we check how the conclusions change under varying the parameter $a$ 
in the potential (\ref{potential-2}). Figure~\ref{fig:dep-cq-bifurcation} presents the bifurcation diagram for the roots of $V_{\rm eff}$, which can be computed explicitly as 
\begin{equation}
\label{Melnikov}
V_{\rm eff}(s) = 12-a(\pi^2+6s^2)+s[-12+a(\pi^2+4s^2)]\coth(s).
\end{equation}
The two symmetric roots of $V_{\rm eff}$ exist for $a \in (0,a_{\rm crit})$. They coalesce at $s = 0$ in the fold bifurcation at  $a = a_{\rm crit}$, which can be found explicitly as 
\begin{equation}
\label{a-crit}
a=a_{\rm crit}=\frac{12}{\pi^2-6}\approx3.10109.
\end{equation}
Pseudo-arclength continuation of the kink solutions to the stationary equation (\ref{ode}) reproduces both the positive and negative branches 
close to each other  over the entire computed branch, with the slight differences observed in the inset. The difference is explained by the fact that the predictions 
from the effective potential $V_{\rm eff}$ are valid in the asymptotic 
limit $\varepsilon \to 0$, whereas we keep  $\varepsilon=0.1$ in our numerical approximations.  This continuation result naturally extends
the two cases at $a=1$ shown in Figure~\ref{fig:even}.

\begin{figure}[htbp]
	\centering
	\includegraphics[width=0.475\linewidth,height=0.4\textwidth]{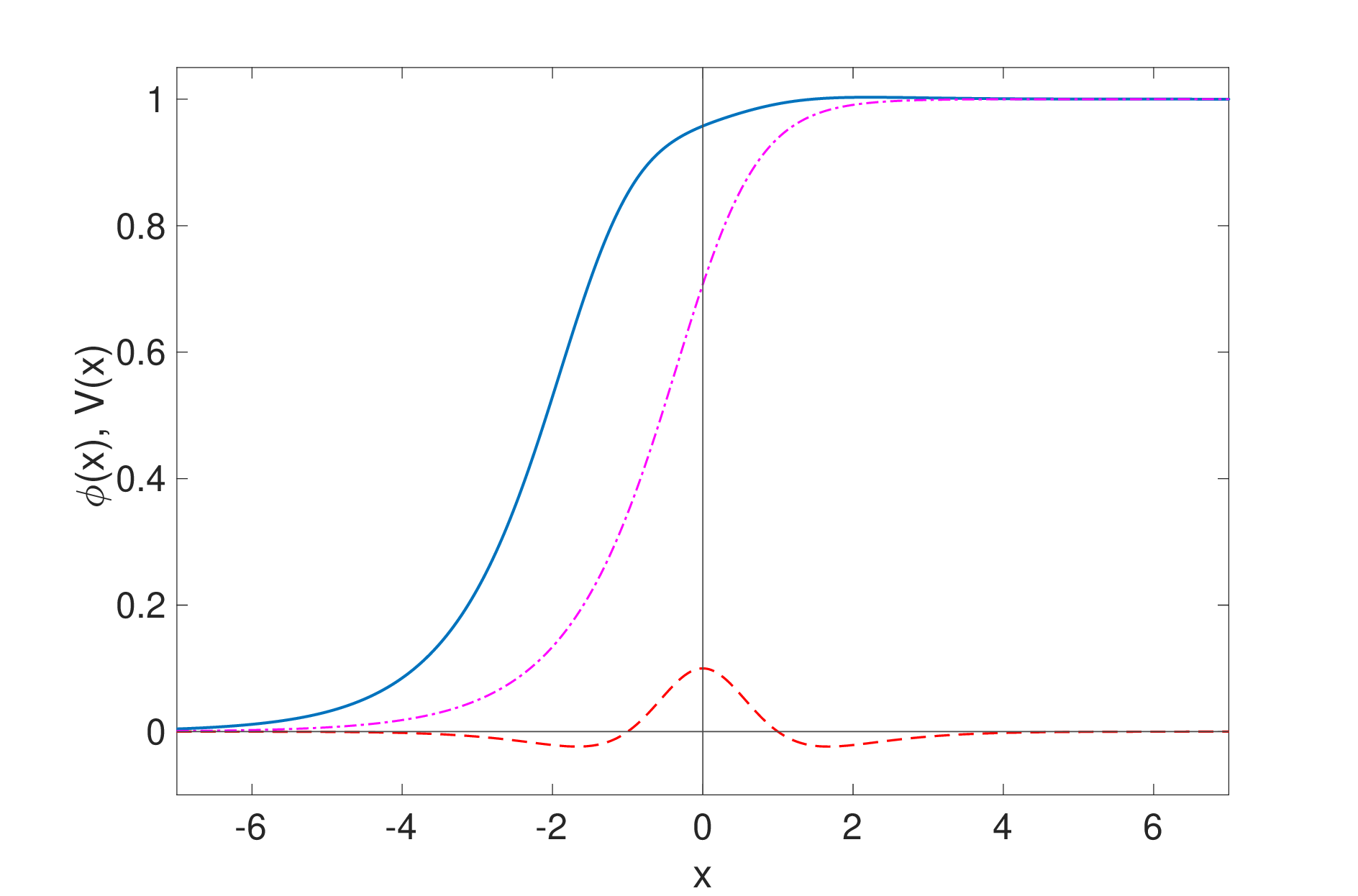}
	\includegraphics[width=0.475\linewidth,height=0.4\textwidth]{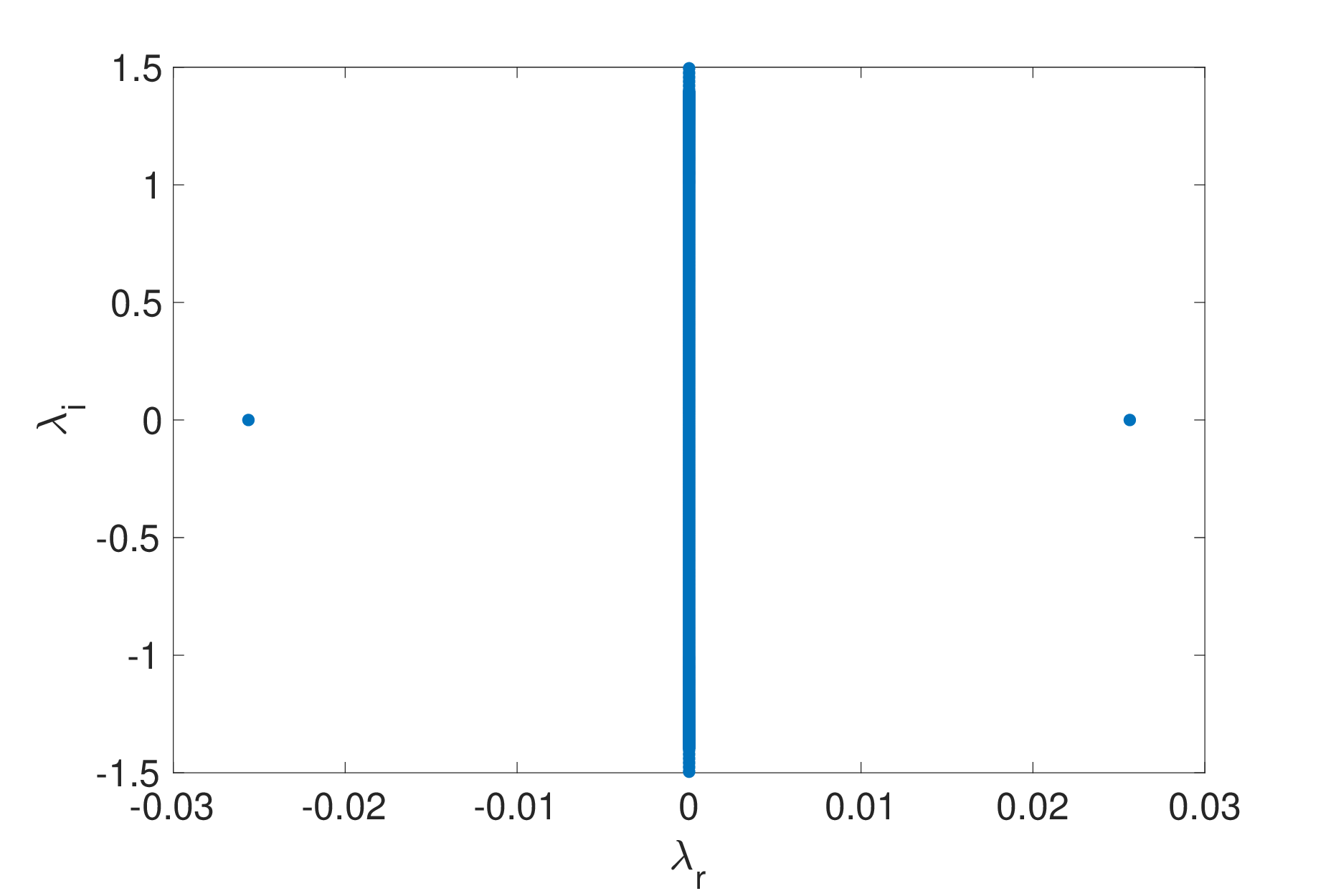}
	\includegraphics[width=0.475\linewidth,height=0.4\textwidth]{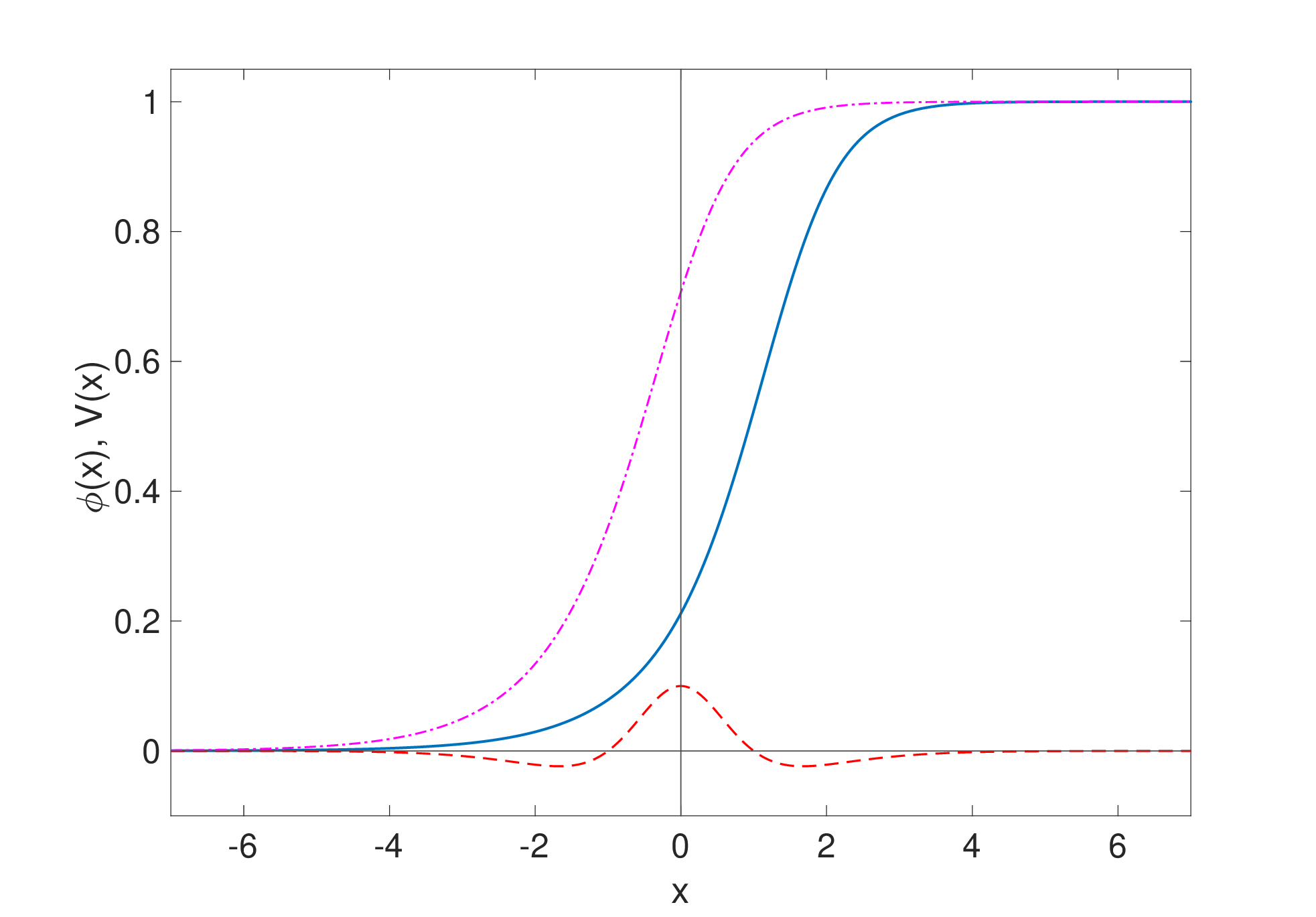}
	\includegraphics[width=0.475\linewidth,height=0.4\textwidth]{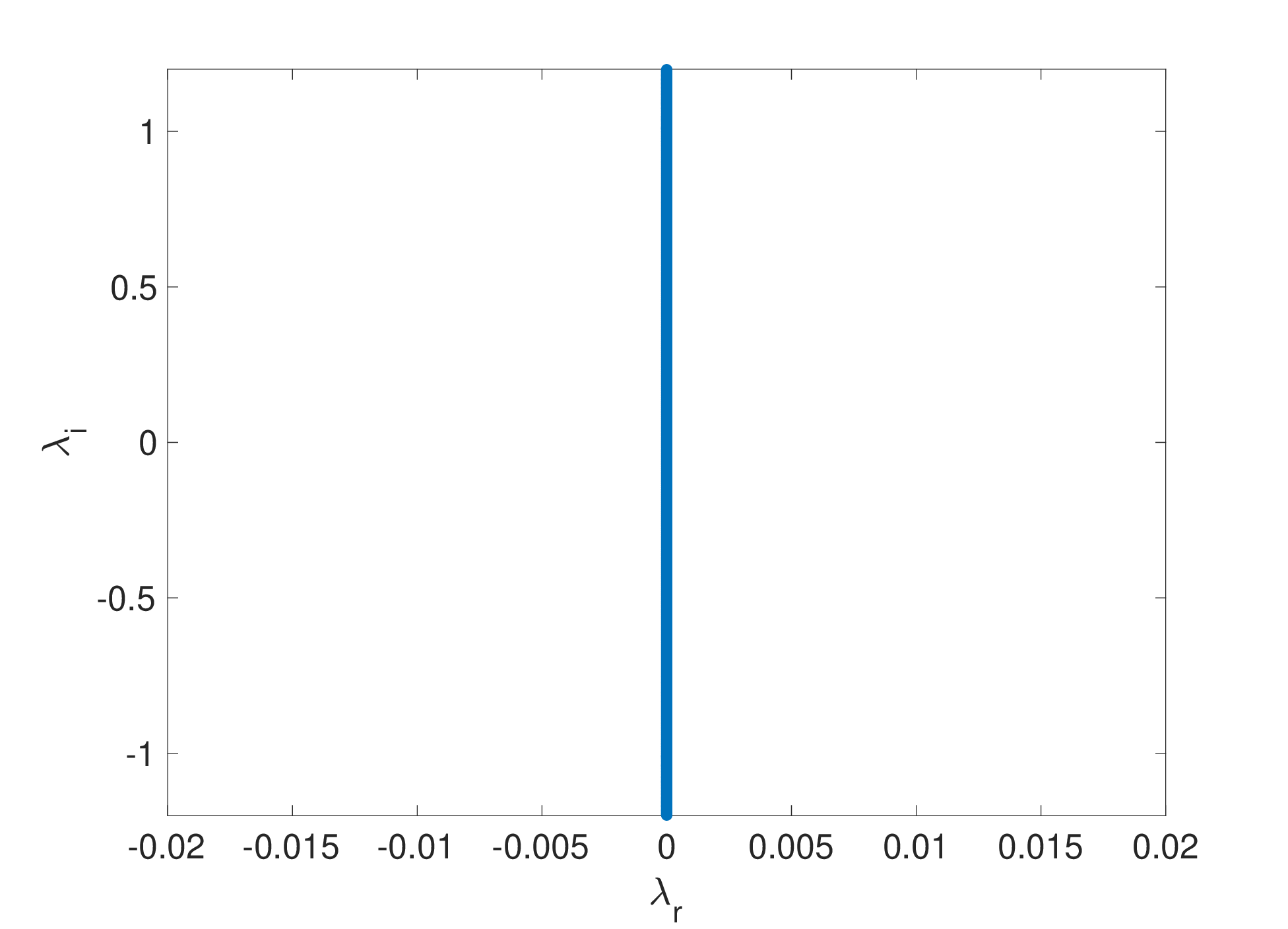}
	\caption{The same as Figure \ref{fig:odd-plus} but for the potential $V(x) = (1-x^2){\rm sech}^2(x)$ with $\varepsilon = 0.1$. 
		Top row: the negative root of $V_{\rm eff}$. Bottom row: the positive root of $V_{\rm eff}$. }
	\label{fig:even}
\end{figure}

\begin{figure}[htbp]
	\centering
	\includegraphics[width=0.7\textwidth]{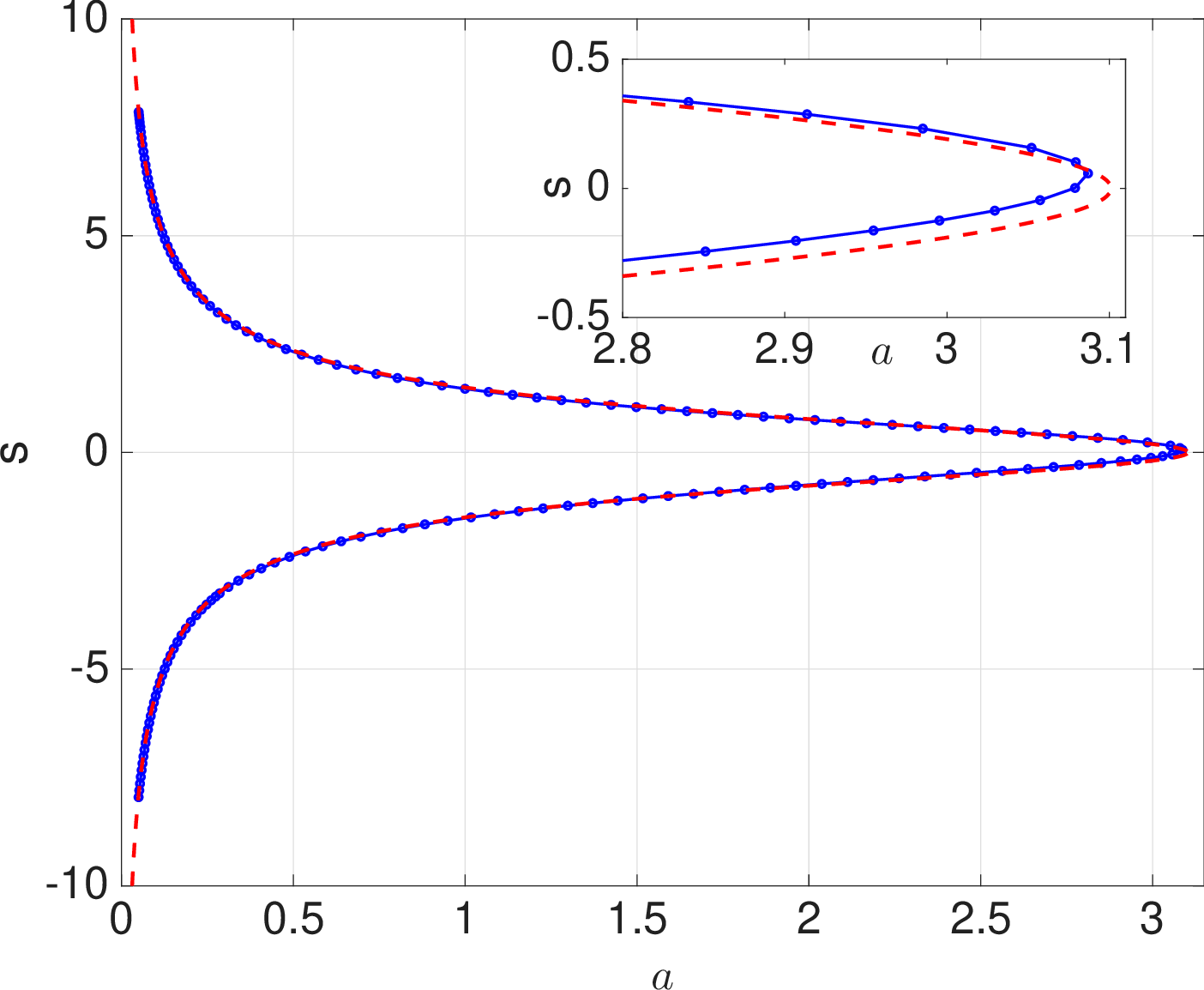}
	\caption{Bifurcation diagram for the potential $V(x) = (1-a x^2){\rm sech}^2(x)$ for $\varepsilon = 0.1$, showing the numerically computed kink-center location $s$ from pseudo-arclength continuation together with the roots of $V_{\rm eff}$ given by (\ref{Melnikov}).  The diagram captures the positive and negative branches and their merger near the fold bifurcation at $a_{\rm crit}$ given by (\ref{a-crit}).}
	\label{fig:dep-cq-bifurcation}
\end{figure}

Figure \ref{fig:dep-cq-spectral-comparison} compares the numerically computed eigenvalues $\mu_+(\varepsilon)$ of $\mathcal{L}_+(\varepsilon)$ (left) and $\lambda(\varepsilon)$ of the spectral stability problem (\ref{spec-stab}) (right) with the leading-order asymptotic approximations 
(\ref{eig-L-plus}) and (\ref{eig-spec-stab}) for the same value 
of $\varepsilon = 0.1$ and varying values of $a$. 

The eigenvalue $\mu_+(\varepsilon)$ is negative for the negative root $s_0$ of $V_{\rm eff}$ and positive for the positive root $s_0$. The agreement is very good and consistent with the approximation in Figure \ref{fig:dep-cq-spectral-comparison}. The discrepancy between the two curves is visible for $a>1.5$ since the value of $\varepsilon$ is fixed. The eigenvalue $\mu_{+}(\varepsilon)$ vanishes as $a \to a_{\rm crit}$ due to the fold bifurcation. 

The eigenvalue $\lambda(\varepsilon)$ is positive only for the negative root $s_0$ of $V_{\rm eff}$. To obtain an accurate numerical approximation 
of $\lambda(\varepsilon)$ when it is small, we extend the computational domain to resolve the slowly decaying eigenfunctions. The agreement is again very good and consistent with the approximation in Figure \ref{fig:dep-cq-spectral-comparison}. The eigenvalue $\lambda(\varepsilon)$ also vanishes as $a \to a_{\rm crit}$. Near points where $\lambda(\varepsilon)$ is very small, relative errors are naturally sensitive, so the absolute comparison is the more meaningful diagnostic.

Taken together, Figures ~\ref{fig:dep-cq-bifurcation} and \ref{fig:dep-cq-spectral-comparison} provide a
direct numerical verification of the pinning criterion established in Theorem \ref{th-existence}, the eigenvalue-splitting mechanism of Theorem~\ref{th-splitting}, and the unstable-eigenvalue asymptotics of Theorem~\ref{th-stability}. More specifically, the numerical results therefore support the principal conclusions of the analysis: the effective potential $V_{\rm eff}$ correctly selects the pinned kink centers near a root $s_0$, the sign of $V_{\rm eff}'(s_0)$ determines the sign of the small eigenvalue $\mu_+(\varepsilon)$ of $\mathcal{L}_+(\varepsilon)$, and on the branch with $V_{\rm eff}'(s_0)>0$ the resulting negative direction of $\mathcal{L}_+(\varepsilon)$ produces the real positive unstable eigenvalue $\lambda(\varepsilon)$ of the spectral stability problem (\ref{spec-stab}).

\begin{figure}[htbp]
	\centering
	\begin{minipage}[t]{0.48\textwidth}
		\centering
		\includegraphics[width=\linewidth,height=0.75\textwidth]{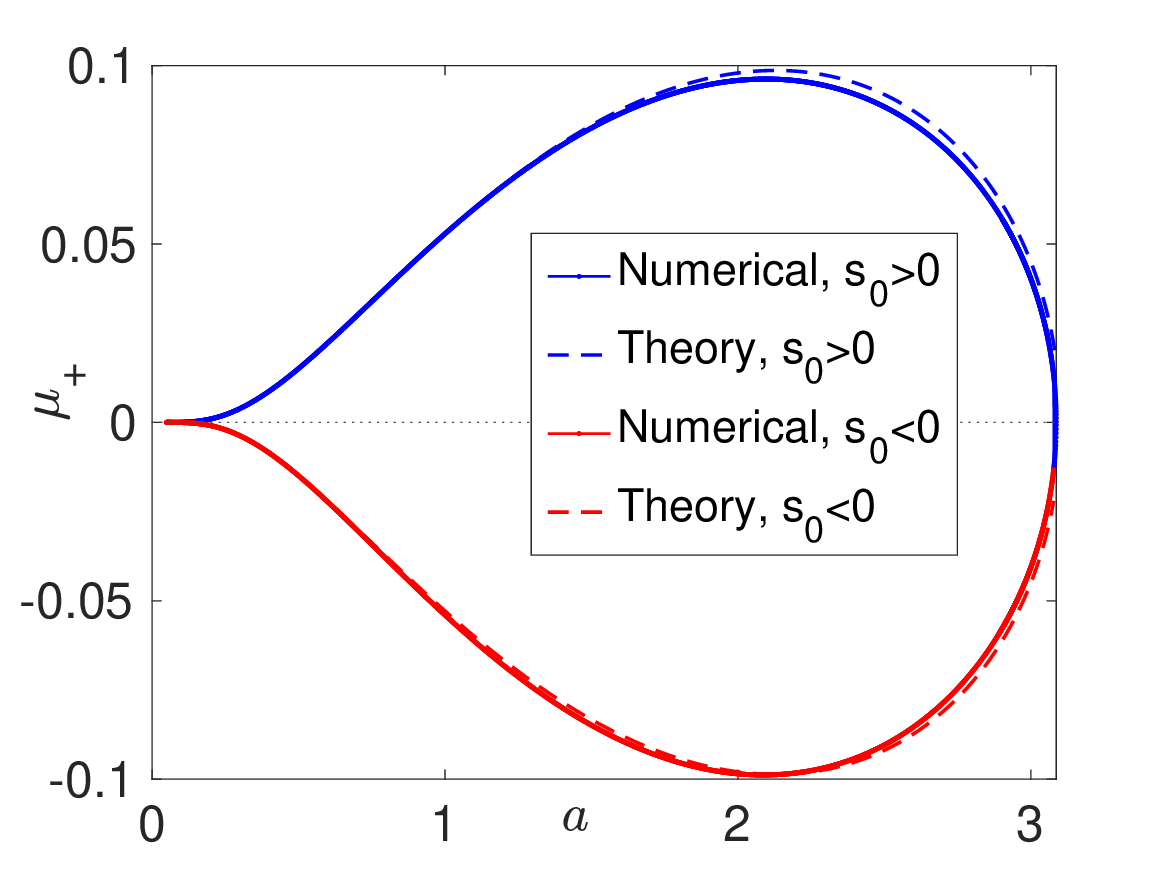}
		\centerline{\small }
	\end{minipage}\hfill
	\begin{minipage}[t]{0.48\textwidth}
		\centering
		\includegraphics[width=\linewidth,height=0.75\textwidth]{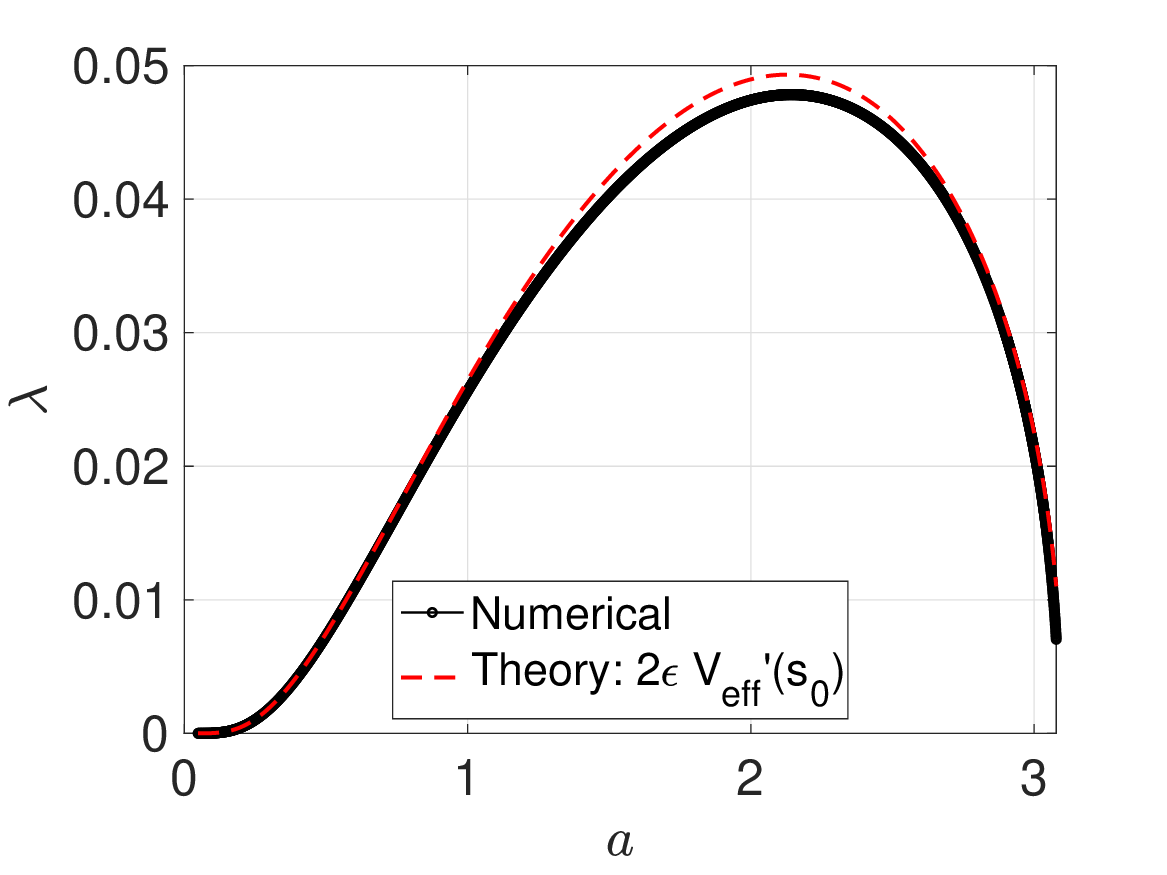}
		\centerline{\small }
	\end{minipage}
	\caption{Comparison of the numerically detected eigenvalues $\mu_+(\varepsilon)$ of $\mathcal{L}_+(\varepsilon)$ (left) 
		and $\lambda(\varepsilon)$ of the spectral stability problem (\ref{spec-stab}) with the leading-order theoretical predictions (\ref{eig-L-plus}) and (\ref{eig-spec-stab}).}
	\label{fig:dep-cq-spectral-comparison}
\end{figure}

\subsection{Dynamics in the potential (\ref{potential-2})}

Figure~\ref{fig:dep-cq-dynamics} presents the dynamical evolution of the perturbed unstable kink associated with the negative root of $V_{\rm eff}$. The initial condition consists of the kink shown on the top left
panel of  Figure~\ref{fig:even} perturbed via a (small) multiple of the unstable eigenvector of the spectral stability problem (\ref{spec-stab}).
In agreement with the unstable real positive eigenvalue $\lambda$ 
on the top right panel of Figure~\ref{fig:even}, the perturbations grow exponentially along the unstable
eigendirection during the early stage of the evolution.
In the time-dependent simulation, this results in the
departure of the kink from the neighborhood of the unstable pinned kink 
position. In the process, it emits extended wave radiation that escapes
toward the boundary. Subsequently, the kink
moves to the positive side where it approaches the stable pinned kink monotonically.

This dynamical evolution is also consistent with predictions of the viscous particle equation (\ref{viscous-particle}) in Remark \ref{rem-particle}.
The negative root $s_0$ of $V_{\rm eff}$, for which $V_{\rm eff}'(s_0) > 0$, is a source, whereas the positive root $s_0$ of $V_{\rm eff}$, for which $V_{\rm eff}'(s_0)<0$, is a sink for $\varepsilon > 0$. The trajectory of the particle 
from (\ref{viscous-particle}) is shown in Figure \ref{fig:dep-cq-dynamics} by the dashed line. We observe a satisfactory agreement between the monotone dynamics of the kink observed at the PDE level and the corresponding one-dimensional viscous particle dynamics at the ODE level.

We point out that the dynamics of black solitons in the NLS equation with an external potential is essentially two-dimensional with either saddle point or spiral source dynamics, e.g., see \cite{KevPel08}. The difference is that the splitting of eigenvalues in the spectral stability problem 
(\ref{spec-stab}) features the square root splitting in $\varepsilon$, 
in contrast to the expansion (\ref{eig-spec-stab}).

It is interesting to point out (data not shown here) that in the
case of performing the computation in a finite domain, the radiation may
eventually be reflected from the boundaries (e.g., in the case
of Dirichlet or Neumann boundary conditions)
and impinge on the kink, leading it to bounce back
to the negative side which, in turn, will lead to a reemission event
and a repeat of the transition toward the stable pinned kink and so on. Nevertheless, this feature due to the finite computational domain 
is outside the scope of the present work and hence
we do not explore it further.

\begin{figure}[htbp]
	\centering
	\includegraphics[width=0.7\textwidth]{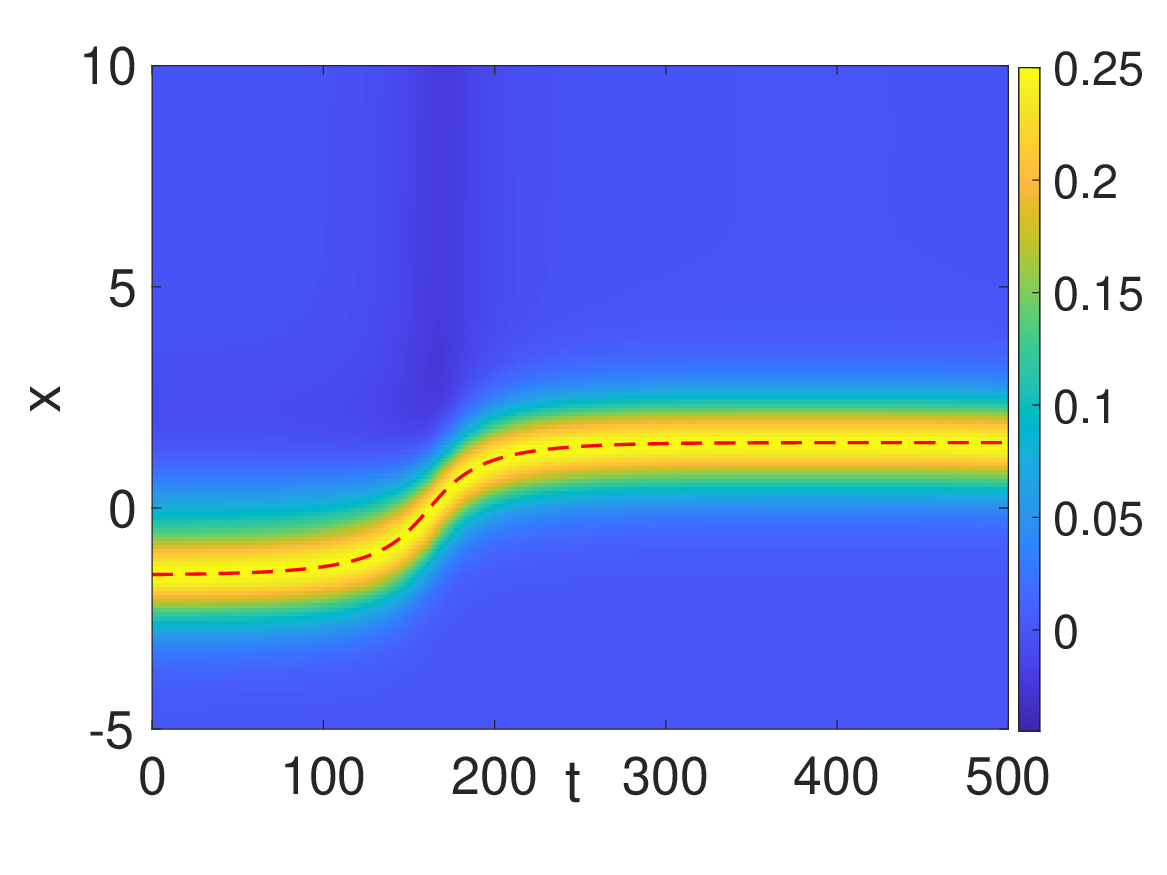}
	\caption{Dynamical evolution of the unstable kink for the potential $V(x) = (1-x^2){\rm sech}^2(x)$ with $\varepsilon = 0.1$.  The kink initially located near $s_0 \approx-1.5$ becomes exponentially unstable and emits a substantial wavepacket of radiation around $t=200$, after which it moves to the positive side and settles in the vicinity of the stable root $s_0 \approx 1.5$. The red dashed line shows the dynamical evolution of the viscous particle equation (\ref{viscous-particle}). The spatio-temporal
		contour plot shows the field $|\psi|^2(1-|\psi|^2)$ which has the form of a
		single-humped pulse.}
	\label{fig:dep-cq-dynamics}
\end{figure}

\bibliographystyle{amsplain}
\bibliography{KinkPersistence_refs}

\end{document}